\documentclass[11pt]{article}

\usepackage[margin=1in]{geometry}
\usepackage[utf8]{inputenc}
\usepackage[T1]{fontenc}
\usepackage{amsmath,amssymb,amsthm,mathtools}
\usepackage{amsfonts}
\usepackage{algorithm}
\usepackage{algpseudocode}
\usepackage{float}
\usepackage{booktabs}
\usepackage{graphicx}
\usepackage{longtable}
\usepackage[numbers]{natbib}
\usepackage{xcolor}
\usepackage{url}
\usepackage[colorlinks=true,citecolor=blue,linkcolor=blue,urlcolor=blue,hypertexnames=false]{hyperref}
\usepackage{enumitem}

\hypersetup{
    pdftitle={A Note on Approximating the Rural Postman Problem below 3/2},
    pdfauthor={Hong Li},
    pdfkeywords={arc routing, rural postman problem, approximation algorithms, maximum entropy distributions}
}

\allowdisplaybreaks[1]
\newtheorem{theorem}{Theorem}[section]
\newtheorem{lemma}[theorem]{Lemma}

\newtheorem{corollary}[theorem]{Corollary}

\theoremstyle{definition}
\newtheorem{definition}[theorem]{Definition}
\theoremstyle{remark}
\newtheorem{remark}[theorem]{Remark}

\title{A Note on Approximating the Rural Postman Problem below 3/2}

\author{Hong Li\thanks{School of Mathematics and Statistics, Yunnan University, Email: \texttt{honglimath@126.com}.}}
\date{}

\begin{document}
\maketitle

\begin{abstract}
We give an approximation algorithm for the rural postman problem with approximation ratio strictly smaller than $3/2$. We obtain this result by adapting to the rural postman problem the technique of sampling from maximum entropy distributions for the metric traveling salesman problem of Karlin, Klein, and Oveis Gharan. We also observe that, for every fixed $\varepsilon>0$, any $\alpha$-approximation algorithm for the metric traveling salesman problem yields an $(\alpha+\varepsilon)$-approximation algorithm for the rural postman problem; this implication is already implicit in the treatment of edges that must be traversed in the work of Lampis on the inapproximability of the traveling salesman problem.
\end{abstract}

\noindent\textbf{Keywords:} Rural postman problem; traveling salesman problem; approximation algorithms; combinatorial optimization.

\section{Introduction}\label{sec:intro}

\subsection{The model}\label{sec:model}

The rural postman problem (RPP) is the edge-required special case of the general routing problem (GRP) introduced by \citet{Orloff1974}. In RPP, we are given a connected undirected graph $G=(V,E)$, a nonnegative edge cost function $c:E\to\mathbb{Q}_{\ge0}$, and a set $R\subseteq E$ of required edges. A feasible solution is a closed walk in $G$ that traverses every edge in $R$ at least once. The cost of a walk is the sum of the costs of its traversed edges, counted with multiplicity. The goal is to find a feasible solution with minimum cost.

\subsection{Related work}\label{sec:related}

The metric traveling salesman problem (metric TSP) is another classical routing problem. Given a complete undirected graph with nonnegative edge costs satisfying the triangle inequality, metric TSP asks for a minimum-cost tour visiting all vertices. The classical algorithm of \citet{Christofides1976} gives a $3/2$-approximation for metric TSP. More recently, \citet{KarlinKleinOveisGharanSTOC2021,KarlinKleinOveisGharanOR2024} gave a randomized approximation algorithm with ratio strictly smaller than $3/2$ by using the technique of sampling from maximum entropy distributions. Their later work proved the LP-relative version of this guarantee for the subtour elimination LP of metric TSP \citep{KarlinKleinOveisGharanFOCS2022}. Finally, \citet{KarlinKleinOveisGharanIPCO2023} derandomized the resulting algorithm. 

Metric TSP can be reduced to RPP, so RPP is at least as hard as
metric TSP \citep{vanBevernEtAl2015}. Conversely, for exact
optimization, RPP can be transformed in polynomial time into TSP
\citep{JungerReineltRinaldi1995}; thus the two problems are
polynomially interreducible in this sense. These exact transformations,
however, do not by themselves preserve multiplicative approximation
guarantees, and hence do not imply an approximation equivalence between
RPP and metric TSP.

\citet{Frederickson1979} observed that the Christofides approach can be adapted to give a $3/2$-approximation for RPP. Even after the better-than-$3/2$ result for metric TSP of \citet{KarlinKleinOveisGharanSTOC2021,KarlinKleinOveisGharanOR2024}, recent treatments continue to record $3/2$ as the approximation guarantee for RPP; see, for example, \citet{BoyaciDangLetchford2023} and \citet{AgarwalAkella2023}.

\subsection{Our result}\label{sec:contribution}

We show that the technique of sampling from maximum entropy distributions for metric TSP \citep{KarlinKleinOveisGharanSTOC2021,KarlinKleinOveisGharanOR2024} can be adapted to RPP, yielding an approximation algorithm with approximation ratio strictly below $3/2$. The main result is the following.

\begin{theorem}\label{thm:main}
There is a constant $\epsilon > 10^{-36}$ and a polynomial-time randomized $(3/2-\epsilon)$-approximation algorithm for RPP.
\end{theorem}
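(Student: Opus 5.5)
We reduce to the metric TSP result of Karlin, Klein, and Oveis Gharan, used as a black box: a randomized polynomial-time algorithm that, for any metric TSP instance, returns a tour of expected cost at most $(3/2-\epsilon_0)$ times optimum with $\epsilon_0>10^{-36}$. The reduction uses the standard treatment of required edges (as in Lampis): subdivide each required edge into a long path of closely spaced points. Each point must be visited, and this essentially forces the tour to traverse the required edge.

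\textbf{Preprocessing.} Replace $c$ by the shortest-path metric $d$ of $G$; any walk in the metric closure maps back to a walk in $G$ of no larger cost. Contract each connected component of $(V,R)$ with zero-cost edges only conceptually. We may assume every required edge has positive cost: zero-cost required edges can be traversed for free by a detour along them when their endpoint is visited, so they only impose the requirement that their endpoints be visited, and we keep those endpoints as vertices.

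\textbf{Construction.} Fix an integer $k$, depending on the accuracy parameter $\varepsilon$ and polynomial in the input size. For each required edge $e=uv$, insert $k-1$ equally spaced points on $e$. This gives a metric space $M$, the shortest-path metric of the subdivided graph. Every required edge becomes a path of $k$ segments, each of cost $c(e)/k$. The TSP instance consists of all subdivision points together with the endpoints of required edges.

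\textbf{Upper bound on the TSP optimum.} An optimal RPP walk visits all these points, so $\mathrm{OPT}_{TSP}\le \mathrm{OPT}_{RPP}$.

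\textbf{Converting a tour back to an RPP walk.} Given a tour $T$ in $M$, consider each required edge $e$. Let $\delta$ be the number of its segments that $T$ does not traverse directly; a tour step is ``direct'' on a segment if it moves between consecutive points of $e$. Patch $e$ by adding a doubled traversal of each missed segment, then take an Eulerian closed walk of the corrected multigraph. A doubled segment costs $2c(e)/k$. The tour must nevertheless visit both endpoints of a missed segment while leaving and entering each via other moves. Since the interior points of $e$ lie only on $e$, any such departure costs at least the distance to the nearest neighboring point, which is $c(e)/k$. A charging argument therefore gives the following: each missed segment forces at least one tour edge of length at least $c(e)/k$ that can be charged uniquely to that segment. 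The total patching cost is thus at most $2\cdot c(T)$, which is still too weak.

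This is the main obstacle: bounding the patching cost by an $\varepsilon$-fraction of $c(T)$. The plan is as follows. The tour visits the $k+1$ points of $e$, and shortcutting only reduces cost. By a rerouting argument, any tour of $M$ can be transformed without increasing its cost into one that, for each required edge, traverses $e$ in at most a constant number of maximal consecutive runs. Within a stretch of collinear points, visiting them in order is optimal (an uncrossing/2-opt argument on the path). After this transformation, each required edge has only $O(1)$ missed segments, so its patching cost is $O(c(e)/k)$. Summed over all edges, this gives $O(c(R)/k)\le O(\mathrm{OPT}_{RPP}/k)$. Choosing $k=\Theta(1/\varepsilon)$, and taking $\varepsilon<\epsilon_0/2$, yields expected cost at most
\[
(3/2-\epsilon_0)\,\mathrm{OPT}_{RPP}+\varepsilon\,\mathrm{OPT}_{RPP}\le (3/2-\epsilon_0/2)\,\mathrm{OPT}_{RPP}.
\]
This proves the theorem with $\epsilon=\epsilon_0/2$.

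The constant $10^{-36}$ then requires the known bound $\epsilon_0$ to be at least $2\cdot10^{-36}$. If the known bound is tighter than that, we instead apply the LP-relative or derandomized version, or refine the reduction to lose only $o(1)$.

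The uncrossing step is the one we expect to be delicate. We would first try to prove it directly: replace each maximal set of visits to interior points of $e$ by a single in-order sweep, and use the triangle inequality to show that the connecting edges do not get longer.
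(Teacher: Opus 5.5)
Your route is different from the paper's proof of this theorem. The paper re-runs the maximum-entropy analysis directly on the preprocessed instance: it splits an endpoint of a required edge to create a zero-cost edge $e_0$ of LP value one, samples a maximum-entropy tree with the required edges contracted, and bounds the minimum $\operatorname{odd}(T)$-join via the slack lemma of Karlin, Klein, and Oveis Gharan. What you propose is essentially the paper's Theorem~\ref{thm:generic-transfer} combined with the metric TSP result. The paper notes that this combination is valid, but it needs a TSP instance enlarged by a factor $k=\Theta(1/\varepsilon)$ with $\varepsilon$ of order $10^{-36}$.

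As written, however, your proposal has a genuine gap exactly where you flag it: you never establish a patching cost of $O(c(R)/k)$. Your charging argument only gives $2c(T)$. The rerouting/uncrossing claim, that every tour can be modified without cost increase so that each required edge is covered by $O(1)$ runs, is only sketched. It can be proved, but the natural proof is the argument below, which makes runs unnecessary. Much of the difficulty comes from your notion of a ``direct'' step. A tour step from an interior point of $e$ to a non-adjacent point of $e$, or to a point off $e$, is realized in the subdivided graph by a path that itself runs along segments of $e$. So the segments not covered by direct steps can be far more numerous than the segments that are genuinely missing.

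The missing idea is the paper's Lemma~\ref{lem:lift-metric-tour}. Replace every tour step by a shortest path in the subdivided graph $H$ itself, where every interior subdivision point has degree two. The resulting multiset $M$ is connected and Eulerian, touches every point, and has cost $c(T)$. If two segments of the same path $P_e$ had multiplicity zero in $M$, the interior points strictly between them would form a separate component, contradicting connectivity. Hence at most one segment per required edge is missing. Doubling it costs at most $2c(e)/k$, for a total of at most $2c(R)/k\le 2\operatorname{OPT}_{\mathrm{RPP}}/k$. Since interior points have even degree, all multiplicities along $P_e$ then have the same parity. You can therefore lower them to $1$ or $2$ and replace $P_e$ by $e$ with that multiplicity, without increasing cost or destroying connectivity and parity.

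Two minor points. First, $P_e$ must carry the original cost $c(e)$, not $\operatorname{dist}(u,v)$: your sentence ``replace $c$ by the shortest-path metric'' must not apply to required edges, which may be costlier than a shortest $u$--$v$ path. Second, no fallback is needed for the constant. The metric TSP guarantee has $\epsilon_0>10^{-36}$ strictly, so choosing $k$ with $2/k<\epsilon_0-10^{-36}$ already gives ratio $3/2-\epsilon$ with $\epsilon>10^{-36}$.
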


We state Theorem~\ref{thm:main} in the randomized form because it
admits a direct proof using the quantitative maximum-entropy slack
theorem quoted below. A stronger LP-relative conclusion is also
available from the subsequent analysis of
\citet{KarlinKleinOveisGharanFOCS2022}, and a corresponding
LP-relative better-than-$3/2$ guarantee can be attained
deterministically using the conditional-expectation derandomization
of \citet{KarlinKleinOveisGharanIPCO2023}; see
Remark~\ref{rem:lp-relative-derandomization}. We record these stronger
consequences separately because their verification in the present
setting requires combining the LP-relative analysis with the
derandomization framework of those works, rather than invoking a
single stated result that applies verbatim to the fixed-required-edge
formulation used here.

The randomized algorithm first applies a preprocessing similar to the transformations used by \citet{ChristofidesEtAl1981} and \citet{JungerReineltRinaldi1995}. On the resulting instance, we adapt the maximum-entropy approach for metric TSP: we sample a spanning tree while ensuring that all required edges are retained, perform a minimum-cost parity correction to obtain a connected Eulerian multigraph, and then expand it into a feasible RPP closed walk. The analysis bounds the expected cost of the sampled tree together with the cost of the parity correction.

The same preprocessing also makes explicit a general approximation transfer that is already implicit in the treatment of edges that must be traversed in \citet[Section~2.1]{Lampis2014}.

\begin{theorem}\label{thm:generic-transfer}
Let $\alpha\ge1$. If metric TSP admits a polynomial-time $\alpha$-approximation algorithm, then, for every fixed $\varepsilon>0$, RPP admits a polynomial-time $(\alpha+\varepsilon)$-approximation algorithm.
\end{theorem}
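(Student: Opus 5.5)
The plan is to reduce RPP to metric TSP by replacing each required edge with a long path of densely placed terminals. A tour that visits every terminal is then forced, up to small error, to traverse the required edges. This follows Lampis's treatment of required edges.

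\emph{Preprocessing.} Remove zero-cost edges by contracting them; this preserves the optimum and is reversible in walks. Let $R$ have connected components $R_1,\dots,R_k$ on vertex sets $V_1,\dots,V_k$. We may assume $k\ge1$ and $c(R)>0$. Let $d$ be the shortest-path metric of $(G,c)$. Every feasible walk costs at least $c(R)$, so $\mathrm{OPT}\ge c(R)$.

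\emph{Building the metric TSP instance.} Fix an integer $N=\lceil 2/\varepsilon\rceil$, up to a constant factor to be tuned. Subdivide each required edge $e=uv$ into $N$ segments of length $c(e)/N$ by inserting $N-1$ new points. The resulting graph $G'$ has the same metric on $V$ and polynomially many vertices. Take as TSP cities all endpoints of required edges together with all subdivision points, and use the shortest-path metric $d'$ of $G'$ restricted to them. This is a metric instance of polynomial size for fixed $\varepsilon$.

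\emph{Relating the optima.} An optimal RPP walk passes through every subdivision point, so shortcutting it gives a TSP tour with $\mathrm{OPT}_{TSP}\le \mathrm{OPT}$. Conversely, take any TSP tour $T$ of cost $C$ and expand each hop into a shortest path in $G'$. The result is a closed walk $W$ in $G'$ of cost $C$ that visits every subdivision point. For each required edge $e$, consider its $N$ segments. Whenever $W$ leaves a segment's interior without crossing it fully, it backtracks. The key step is to repair $W$ into a closed walk in $G$ that traverses every required edge, at extra cost at most $\sum_e O(c(e)/N)\cdot(\text{const})$. The repair works as follows. The vertices of $W$ that lie on $e$ form a subset of the path $e$. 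Replace the portions of $W$ inside $e$ by whole traversals of $e$, plus possibly one additional back-and-forth traversal of $e$. I will charge this to $W$'s own movement inside $e$: since $W$ visits all $N-1$ interior points, it already spends at least $c(e)(N-2)/N$ inside $e$.

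The cleaner version of this argument is the one I would actually write. Take the multiset of segments used by $W$ and add one copy of each required segment that is not already covered. Every segment is covered anyway, since all points are visited and the covered segments within $e$ form a connected union. Now fix parity inside $e$. Segment multiplicities within $e$ must all be equal to the multiplicity at the endpoints' parity pattern. By adding at most one copy of each segment of $e$, I make all multiplicities along $e$ equal: round each up to the maximum multiplicity in $e$ with the right parity. The extra cost of this step is the danger, and it is the main obstacle. The simplest fix is to add copies of the whole edge $e$ with multiplicity $m_e=$ the maximum segment multiplicity. This costs at most $m_e c(e)$, whereas $W$ spends at least about $m_e c(e)/N$ on the heaviest segment. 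That is not small.

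Instead I would use the following bound. The segments of $e$ carry a flow from $W$'s connected Eulerian structure, so the multiplicity function along $e$ changes only at subdivision points where $W$ enters from outside. But subdivision points have no outside neighbours in $G'$. Hence multiplicities along $e$ have constant parity, and every interior point has even degree. The segments used form an interval containing all of $e$, so all segments have multiplicity $\ge1$ with equal parity. The restriction to $e$ then decomposes into a number of full traversals of $e$ plus closed back-and-forth pieces. Deleting the back-and-forth pieces keeps the walk Eulerian and connected, and every multiplicity becomes at least one or two. This gives a walk in $G$ of cost $\le C$ that covers $R$, so no additive loss arises. The subdivision granularity $N$ then only needs to be large enough to guarantee constant parity, and in fact $N=2$ already suffices.

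\emph{Conclusion.} Running the $\alpha$-approximation on the instance gives cost $\le\alpha\,\mathrm{OPT}_{TSP}\le\alpha\,\mathrm{OPT}$. If the parity argument above breaks down under shortcutting, I would fall back to charging an additive loss of $O(c(R)/N)\le\varepsilon\,\mathrm{OPT}$, which still gives a ratio of $\alpha+\varepsilon$.
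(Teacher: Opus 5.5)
\textbf{There is a genuine gap in the lifting step, and your main conclusion is false.} Your metric instance is fine: it is essentially the paper's subdivision, done directly in $G$ instead of the preprocessed graph $\widehat G$. The bound $\mathrm{OPT}_{TSP}\le\mathrm{OPT}$ is also fine.

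The problem is the claim that the segments of a subdivided required edge used by the expanded walk $W$ ``form an interval containing all of $e$.'' From this you conclude that there is no additive loss and that $N=2$ suffices. In fact $W$ can reach the interior points of $e$ from both ends and leave exactly one segment unused. Here is a concrete instance.
\begin{itemize}
\item Take the $4$-cycle $u,a,v,b$ with unit edge costs, plus the chord $uv$ of cost $2$, and make all five edges required.
\item The required multigraph has odd vertices $u$ and $v$, so every feasible walk must add a $\{u,v\}$-join of cost at least $d(u,v)=2$. Hence $\mathrm{OPT}=8$.
\item With $N=2$, let $w$ be the midpoint of $uv$ and $m_{xy}$ the midpoint of $xy$. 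The cyclic tour $u,w,m_{ua},a,m_{av},v,m_{vb},b,m_{bu}$ costs $1+\tfrac32+7\cdot\tfrac12=6$.
\end{itemize}
So $\mathrm{OPT}_{TSP}<\mathrm{OPT}$, and no loss-free repair can exist. This failure has nothing to do with shortcutting; it is already present in the expanded walk. For general $N$, the same example has gap $4/N=2c(uv)/N$, so an additive loss of order $c(R)/N$ is unavoidable. This is exactly the paper's bound.

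Your fallback of an $O(c(R)/N)$ additive loss is the right target, but you give no argument for it. You also correctly discarded your earlier charging attempt. The missing idea is the one the paper uses:
\begin{itemize}
\item Interior subdivision points have degree two in $G'$ and are all visited by $W$. So at most one segment of each subdivided edge can have multiplicity zero in the connected multigraph $W$. Otherwise the interior points between two unused segments would form a separate component.
\item Add two copies of that one missing segment. This costs at most $2c(e)/N$ and preserves connectivity and all degree parities.
\item Now your parity observation applies: all multiplicities along $e$ are positive and of equal parity. Reduce them to $1$ or $2$ and replace the path by $e$ with that multiplicity.
\end{itemize}
The result is a feasible RPP walk of cost at most $C+\tfrac2N c(R)\le C+\tfrac2N\,\mathrm{OPT}$. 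Choosing $N\ge 2/\varepsilon$ gives the ratio $\alpha+\varepsilon$.
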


Before the better-than-$3/2$ approximation algorithms for metric TSP, this implication did not improve the classical $3/2$ guarantee for RPP. Moreover, transferring the currently known improvement below $3/2$ through this reduction requires solving a metric TSP instance of extremely large size. Thus, although the reduction is polynomial time for every fixed $\varepsilon>0$, its better-than-$3/2$ consequence is mainly of theoretical interest. The maximum-entropy algorithm analyzed in this paper avoids this increase in instance size.

After the preprocessing, we formulate an LP relaxation for RPP, denoted by \textup{\textsc{RPP-LP}}. For completeness, Appendix~\ref{app:lp-consequences} records two further consequences of this LP relaxation on the preprocessed RPP instances. The approximation transfer above also holds relative to \textup{\textsc{RPP-LP}}, and the worst-case integrality gap of \textup{\textsc{RPP-LP}} is equal to that of the subtour elimination LP for metric TSP; in particular, it is strictly below $3/2$ by \citet{KarlinKleinOveisGharanFOCS2022}.

\subsection{Organization}
Section~\ref{sec:prelim} fixes notation and records the polyhedral tools used later. Section~\ref{sec:preprocessing} gives the preprocessing and the LP relaxation for the preprocessed RPP instance. Section~\ref{sec:maxent} explains maximum-entropy distributions. Section~\ref{sec:algorithm} states the maximum-entropy algorithm and proves Theorem~\ref{thm:main}. Section~\ref{sec:generic-reduction} makes explicit the approximation transfer in Theorem~\ref{thm:generic-transfer}, and Section~\ref{sec:conclusion} concludes the paper. Appendix~\ref{app:lp-consequences} records the LP-relative transfer and the integrality-gap equality.

\section{Preliminaries}\label{sec:prelim}

\subsection{Notation}\label{subsec:notation}

Let $G=(V,E)$ be a graph. For $S\subseteq V$, let $\delta_G(S)$ denote the set of edges with exactly one endpoint in $S$, and let $E_G(S)$ denote the set of edges with both endpoints in $S$. We write $\delta_G(v):=\delta_G(\{v\})$. Within a statement or proof in which a single ambient graph $G$ has been fixed, we suppress the subscript and write $\delta(S)$ and $E(S)$.

For a vector $x\in\mathbb R^E$ and an edge set $A\subseteq E$, write $x(A):=\sum_{e\in A}x_e$. For an edge set $A$, let $\chi_A$ denote its incidence vector; when $C$ is a tour, $\chi_C$ denotes the incidence vector of its edge set. For edge multisets $A$ and $B$, write $A\uplus B$ for their multiset union. For an edge multiset $A$, let $\operatorname{odd}(A)$ denote the set of vertices of odd degree in the corresponding multigraph.

For an edge cost function $c$ and an edge multiset $A$, let $c(A)$ denote the sum of the costs of the edges in $A$, counted with multiplicity. For a vector $x\in\mathbb R^E$, write $c(x):=\sum_{e\in E}c_ex_e$. If $W$ is a walk, we write $c(W)$ for the cost of its edge multiset.

For an RPP instance $J$ and a metric TSP instance $I$, let $\operatorname{OPT}_{\mathrm{RPP}}(J)$ and $\operatorname{OPT}_{\mathrm{TSP}}(I)$ denote their optimum values. When the instance is clear from the context, we omit the argument.

\subsection{The subtour elimination LP, spanning-tree polytope, and \texorpdfstring{$Q$}{Q}-join polyhedron}\label{subsec:polyhedra}

We first recall the subtour elimination LP for metric TSP \citep{DantzigFulkersonJohnson1954}. Let $(V,\binom{V}{2})$ be a complete graph with a nonnegative metric cost function $c$. The subtour elimination LP is
\begin{align}
\min \quad & \sum_{e\in\binom{V}{2}} c_e x_e \nonumber\\
\text{s.t.}\quad
& x(\delta(v))=2,
&& \forall\,v\in V, \label{eq:subtour-degree}\\
& x(\delta(S))\ge2,
&& \forall\,\emptyset\ne S\subsetneq V, \label{eq:subtour-cut}\\
& x_e\ge0,
&& \forall\,e\in\binom{V}{2}. \label{eq:subtour-nonneg}
\end{align}
We refer to this relaxation as \textup{\textsc{TSP-LP}}. When these constraints are applied to a graph $G=(V,E)$ that is not complete, a vector $x\in\mathbb R^E$ is extended to $\binom{V}{2}$ by setting $x_e=0$ for every $e\in\binom{V}{2}\setminus E$. We denote the optimum value of \textup{\textsc{TSP-LP}} on a metric TSP instance $I$ by $L_{\mathrm{TSP}}(I)$.

For a connected graph $G=(V,E)$, the spanning-tree polytope is the set of vectors $x\in\mathbb R^E$ satisfying the following inequalities of \citet{Edmonds1970}:
\begin{align}
x(E)&=|V|-1, \label{eq:tree-total}\\
x(E(S))&\le |S|-1,
&& \forall\,\emptyset\ne S\subsetneq V, \label{eq:tree-rank}\\
x_e&\ge0,
&& \forall\,e\in E. \label{eq:tree-nonneg}
\end{align}
For any edge cost function $c:E\to\mathbb R$, the minimum cost of a spanning tree equals the minimum of $\sum_{e\in E}c_ex_e$ over this polytope. Hence a minimum-cost spanning tree can be computed in polynomial time, and its cost is at most $\sum_{e\in E}c_ex_e$ for every vector $x$ in the spanning-tree polytope.

Let $Q\subseteq V$ have even cardinality. A $Q$-join is an edge multiset $A$ such that $\operatorname{odd}(A)=Q$. In particular, if $A$ is a connected edge multiset, then adding an $\operatorname{odd}(A)$-join makes every vertex have even degree. The $Q$-join polyhedron of Edmonds and Johnson \citep{EdmondsJohnson1973} is
\begin{align}
x(\delta(S))&\ge1,
&& \forall\,S\subseteq V\text{ with }|S\cap Q|\text{ odd}, \label{eq:qjoin-cut}\\
x_e&\ge0,
&& \forall\,e\in E. \label{eq:qjoin-nonneg}
\end{align}
The minimum cost of a $Q$-join equals the minimum of $\sum_{e\in E}c_ex_e$ over this polyhedron. Hence a minimum-cost $Q$-join can be computed in polynomial time, and its cost is at most $\sum_{e\in E}c_ex_e$ for every vector $x$ in the $Q$-join polyhedron.

\section{Preprocessing and the RPP linear program}\label{sec:preprocessing}

In this section, we transform the input RPP instance into a complete graph in which the required edges form a matching and all other edges correspond to shortest paths in the original graph. We then formulate an LP relaxation that preserves the required-edge structure used in Section~\ref{sec:algorithm}. The preprocessed graph also forms the starting point for Section~\ref{sec:generic-reduction}.

Throughout this section, let $G=(V,E)$ be the original connected graph with nonnegative edge costs $c$, and let $R=\{e_1,\ldots,e_m\}$ be the set of required edges. The cases $m=0$ and $m=1$ are handled directly by the algorithm in Section~\ref{sec:algorithm}; hence the discussion below assumes $m\ge2$.

\subsection{The preprocessed complete graph}\label{subsec:preprocessed-graph}

We use a preprocessing similar to the transformations of \citet{ChristofidesEtAl1981} and \citet{JungerReineltRinaldi1995}. For each required edge $e_i=a_ib_i\in R$, create two private vertices $i^-$ and $i^+$. Let $\widehat V=\{i^-,i^+:i=1,\ldots,m\}$, and define the projection map $\pi:\widehat V\to V$ by $\pi(i^-)=a_i$ and $\pi(i^+)=b_i$. For each $i$, add the edge $\widehat e_i=i^-i^+$ with cost $\widehat c_{\widehat e_i}=c_{e_i}$. These edges are the required edges of the preprocessed graph, and we denote their set by $\widehat R=\{\widehat e_1,\ldots,\widehat e_m\}$.

For every edge $uv\in\binom{\widehat V}{2}\setminus\widehat R$, set $\widehat c_{uv}=\operatorname{dist}_G(\pi(u),\pi(v))$, where $\operatorname{dist}_G$ denotes the shortest-path distance with respect to $c$. We call these edges connector edges and denote their set by $\widehat E_{\mathrm{con}}=\binom{\widehat V}{2}\setminus\widehat R$. Thus $\widehat G=(\widehat V,\widehat E)$, where $\widehat E=\widehat R\cup\widehat E_{\mathrm{con}}$, is a complete graph whose edges are partitioned into the required edges $\widehat R$ and the connector edges $\widehat E_{\mathrm{con}}$.

The preprocessed graph has the following properties.
\begin{enumerate}[label=\textup{(P\arabic*)}]
\item The required edges in $\widehat R$ form a matching, and every vertex of $\widehat G$ is incident to exactly one edge of $\widehat R$.
\item The cost of a connector edge $uv$ is the shortest-path distance from $\pi(u)$ to $\pi(v)$ in the original graph. The full cost function $\widehat c$ need not be metric, because the cost of a required edge $\widehat e_i$ is $c_{e_i}$, which may exceed $\operatorname{dist}_G(a_i,b_i)$.
\item A tour in $\widehat G$ that visits every vertex exactly once and traverses every edge of $\widehat R$ can be expanded into a feasible RPP closed walk in $G$ of the same cost. Conversely, every feasible RPP closed walk can be transformed into such a tour in $\widehat G$ without increasing its cost.
\end{enumerate}

The next lemma makes the last property precise.

\begin{lemma}\label{lem:preprocess-opt}
The minimum cost of a tour in $\widehat G$ that visits every vertex exactly once and traverses every edge of $\widehat R$ is equal to the optimum value $\operatorname{OPT}_{\mathrm{RPP}}$ of the original RPP instance on $G$ with required edge set $R$.
\end{lemma}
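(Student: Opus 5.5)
We prove the two inequalities separately, following property (P3). Here a tour in $\widehat G$ means a Hamiltonian cycle on $\widehat V$ (with $m\ge2$, so $|\widehat V|\ge4$) that contains every edge of $\widehat R$.

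\emph{Tour to walk ($\ge$).} Let $C$ be such a tour. By (P1) every vertex is incident to exactly one required edge, and that edge lies in $C$. Hence $C$ alternates between required edges and connector edges. Write $C$ as $\widehat e_{\sigma(1)}, f_1, \widehat e_{\sigma(2)}, f_2,\dots,\widehat e_{\sigma(m)}, f_m$, where each required edge is traversed in some orientation and $f_k$ is a connector edge joining the head of $\widehat e_{\sigma(k)}$ to the tail of $\widehat e_{\sigma(k+1)}$. We build a closed walk in $G$ as follows. Replace each $\widehat e_i$ by the edge $e_i$, traversed from $\pi$ of its tail to $\pi$ of its head. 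Replace each connector edge $uv$ by a shortest $\pi(u)$--$\pi(v)$ path in $G$, whose cost is $\widehat c_{uv}$ by definition; if $\pi(u)=\pi(v)$ this path is empty. Consecutive pieces share endpoints because $\pi$ of a private vertex is exactly the corresponding endpoint of $e_i$. The result is therefore a closed walk in $G$ that traverses every $e_i\in R$ and has cost $\widehat c(C)$. Hence $\operatorname{OPT}_{\mathrm{RPP}}\le \widehat c(C)$.

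\emph{Walk to tour ($\le$).} Let $W$ be an optimal feasible closed walk in $G$. For each $i$, choose one traversal of $e_i$ in $W$ as its designated occurrence, and record its direction ($a_i\to b_i$ or $b_i\to a_i$). Since the edges $e_i$ are distinct, the designated occurrences are distinct positions in $W$. Read them cyclically in the order they appear, obtaining $e_{\sigma(1)},\dots,e_{\sigma(m)}$. Between the head of the designated occurrence of $e_{\sigma(k)}$ and the tail of that of $e_{\sigma(k+1)}$, the walk $W$ contains a subwalk $P_k$. The cost of $P_k$ is at least the $G$-distance between these vertices, which is the cost of the connector edge joining the corresponding private vertices. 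Orient $\widehat e_{\sigma(k)}$ in $\widehat G$ consistently with the recorded direction; for example, if $e_i$ was traversed $b_i\to a_i$, use $i^+\to i^-$. Then the cyclic sequence $\widehat e_{\sigma(1)}$, connector, $\widehat e_{\sigma(2)}$, $\dots$ visits each of the $2m$ private vertices exactly once and uses every edge of $\widehat R$. Each connector joins distinct vertices belonging to different required edges $\widehat e_{\sigma(k)}$ and $\widehat e_{\sigma(k+1)}$, and these differ because $m\ge2$. So this is a valid tour. Its cost is $\sum_i c_{e_i}+\sum_k \operatorname{dist}_G(\cdot,\cdot)\le \sum_i c_{e_i}+\sum_k c(P_k)\le c(W)$. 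The last inequality holds because the designated occurrences and the subwalks $P_k$ partition $W$, and any further traversals of required edges inside the $P_k$ are simply counted as part of those subwalks.

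\emph{Main obstacle.} The delicate step is the bookkeeping in the second direction. We must check that the designated occurrences together with the gaps $P_k$ decompose $W$ exactly, with no cost counted twice. We must also check that the case $m=2$ does not create a degenerate cycle. For $m=2$ the tour is $1^{\mp}\,1^{\pm}\,2^{\mp}\,2^{\pm}$ closing back, which is a 4-cycle on distinct vertices; the two connector edges are distinct edges of the complete graph $\widehat G$, so this is a genuine Hamiltonian cycle. Nonnegativity of $c$ ensures that the distance-based costs never exceed the subwalk costs. Combining the two directions gives equality.
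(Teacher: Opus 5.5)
Your proof is correct and follows the same argument as the paper. In one direction you expand each required edge $\widehat e_i$ into $e_i$ and each connector edge into a shortest path in $G$. In the other you designate one traversal of each $e_i$ in $W$ and replace the subwalks between consecutive designated traversals by connector edges. Your extra bookkeeping (alternation of the tour via (P1), the decomposition of $W$, and the $m=2$ case) only makes explicit what the paper leaves implicit.
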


\begin{proof}
Let $W$ be a feasible RPP closed walk in the original graph. For each required edge $e_i\in R$, choose one traversal of $e_i$ in $W$, and read the chosen traversals in their cyclic order along $W$. A chosen traversal of $e_i$ determines whether the corresponding required edge $\widehat e_i$ is traversed from $i^-$ to $i^+$ or from $i^+$ to $i^-$. Between two consecutive chosen traversals, $W$ contains a walk joining the corresponding endpoints in the original graph. Replace this walk by the connector edge between the corresponding private vertices. By the definition of its cost, this replacement does not increase the total cost. Since every required edge is chosen exactly once and all private vertices are distinct, the resulting tour visits every vertex of $\widehat G$ exactly once and traverses every edge of $\widehat R$. Hence its cost is at most $c(W)$, and the minimum cost of such a tour is at most $\operatorname{OPT}_{\mathrm{RPP}}$.

Conversely, let $C$ be a tour in $\widehat G$ that visits every vertex exactly once and traverses every edge of $\widehat R$. Replace every required edge $\widehat e_i$ by the corresponding original required edge $e_i$, and replace every connector edge $uv$ by a shortest path in $G$ from $\pi(u)$ to $\pi(v)$. The resulting closed walk traverses every edge of $R$ and is therefore feasible for the RPP. Its cost is equal to $\widehat c(C)$. Hence $\operatorname{OPT}_{\mathrm{RPP}}$ is at most the minimum cost of such a tour in $\widehat G$.
\end{proof}

\subsection{The RPP linear program}\label{subsec:rpp-lp}

On the preprocessed graph $\widehat G$ with required edge set $\widehat R$, we use the following subtour-elimination relaxation for finding a minimum-cost tour that traverses every edge of $\widehat R$. The variables are $x_e$ for the connector edges $e\in\widehat E_{\mathrm{con}}$. For a vector $x\in\mathbb R^{\widehat E_{\mathrm{con}}}$, let $\overline x\in\mathbb R^{\widehat E}$ be its extension defined by $\overline x_e=x_e$ for $e\in\widehat E_{\mathrm{con}}$ and $\overline x_e=1$ for $e\in\widehat R$. Thus every required edge in $\widehat R$ is fixed to one rather than represented by a variable. The relaxation is
\begin{align}
\min \quad &
\sum_{e\in\widehat E_{\mathrm{con}}}\widehat c_e x_e
+\sum_{e\in\widehat R}\widehat c_e
\nonumber\\
\text{s.t.}\quad
& \overline x(\delta(v))=2,
&& \forall\, v\in\widehat V, \label{eq:rpp-lp-degree}\\
& \overline x(\delta(S))\ge2,
&& \forall\, \emptyset\ne S\subsetneq\widehat V, \label{eq:rpp-lp-cut}\\
& x_e\ge0,
&& \forall\, e\in\widehat E_{\mathrm{con}}. \label{eq:rpp-lp-nonneg}
\end{align}
We call this relaxation \textup{\textsc{RPP-LP}}. Let $x^*$ be an optimum extreme point solution, let $\overline x^*$ be its extension to $\widehat E$, and let $L$
denote the optimum value.

\begin{lemma}\label{lem:rpp-lp-lower}
The optimum value $L$ of \textup{\textsc{RPP-LP}} can be computed in polynomial time and satisfies $L\le \operatorname{OPT}_{\mathrm{RPP}}$, where $\operatorname{OPT}_{\mathrm{RPP}}$ refers to the original RPP instance on $G$ with required edge set $R$.
\end{lemma}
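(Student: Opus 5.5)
The statement has two parts, and I would handle them separately: first the lower bound $L\le\operatorname{OPT}_{\mathrm{RPP}}$, then polynomial-time computability.

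\textbf{Lower bound.} By Lemma~\ref{lem:preprocess-opt}, there is a tour $C$ in $\widehat G$ that visits every vertex exactly once, traverses every edge of $\widehat R$, and has $\widehat c(C)=\operatorname{OPT}_{\mathrm{RPP}}$. Let $x$ be the restriction of $\chi_C$ to $\widehat E_{\mathrm{con}}$. Since $C$ contains every edge of $\widehat R$, the extension $\overline x$ equals $\chi_C$. I then check the constraints:
\begin{itemize}
\item The degree constraints \eqref{eq:rpp-lp-degree} hold because $C$ is a Hamiltonian cycle.
\item The cut constraints \eqref{eq:rpp-lp-cut} hold because a Hamiltonian cycle crosses every proper nonempty cut an even, positive number of times.
\item Nonnegativity is clear.
\end{itemize}
The objective value of $x$ is $\widehat c(C)$, so $L\le\operatorname{OPT}_{\mathrm{RPP}}$. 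If $m\ge2$ is needed to make $\widehat G$ large enough for $C$ to be a genuine cycle, that is the standing assumption here. For $m=2$ the tour is a 4-cycle, which is still fine.

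\textbf{Computability.} The LP has polynomially many variables ($O(m^2)$) but exponentially many cut constraints \eqref{eq:rpp-lp-cut}, so I would use the ellipsoid method with a separation oracle. Given a candidate $x$, the oracle works as follows:
\begin{itemize}
\item Checking nonnegativity and the $2m$ degree equations is direct.
\item For the cut constraints, form $\overline x$ (setting the required edges to $1$) and compute a global minimum cut of $\widehat G$ with capacities $\overline x$. This can be done with $|\widehat V|-1$ max-flow computations, or with Stoer--Wagner.
\item If the minimum cut value is below $2$, the minimizing set $S$ gives a violated inequality. This inequality is linear in $x$, with constant terms contributed by $|\widehat R\cap\delta(S)|$.
\end{itemize}

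By the equivalence of separation and optimization (Grötschel--Lovász--Schrijver), an optimal extreme point $x^*$ and the value $L$ can then be computed in time polynomial in $m$ and the encoding length of $\widehat c$. The edge costs $\widehat c$ are rational and computable in polynomial time from shortest-path distances in $G$, so that encoding length is polynomial in the input.

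\textbf{Main obstacle.} The only step that needs care is the computability part: confirming that fixing $\overline x_e=1$ on $\widehat R$ still yields an exact min-cut separation, and that the constraint coefficients have polynomially bounded encoding size, so that the GLS framework applies. Both points follow because the fixed values only add constants to the cut inequalities. Alternatively, I could cite the compact extended formulation of the subtour LP via flows, which would give a polynomial-size LP directly.
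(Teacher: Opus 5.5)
Your proposal is correct and follows essentially the same route as the paper. The incidence vector of an optimal required-edge tour from Lemma~\ref{lem:preprocess-opt} is feasible with objective value $\operatorname{OPT}_{\mathrm{RPP}}$, which gives $L\le\operatorname{OPT}_{\mathrm{RPP}}$; the ellipsoid method with global min-cut separation on the capacities $\overline x$ gives polynomial-time computability. The paper obtains an optimal extreme point via extra lexicographic optimization calls, which your appeal to Gr\"otschel--Lov\'asz--Schrijver covers.
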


\begin{proof}
The degree constraints and nonnegativity constraints are explicit. Given a vector $x$, a violated cut constraint can be found, if one exists, by computing a global minimum cut in $\widehat G$ with capacities $\overline x_e$. Hence \textup{\textsc{RPP-LP}} can be optimized in polynomial time by
the ellipsoid method \citep{GroetschelLovaszSchrijver1988}; after optimizing the objective, polynomially many additional
lexicographic optimization calls over the optimal face yield an optimum
extreme point solution $x^*$ in polynomial time. The extreme-point property is used later to justify the polynomial-time implementation of the approximate maximum-entropy sampling step; see the proof of Lemma~\ref{lem:contract-required}.

Let $C$ be a minimum-cost tour in $\widehat G$ that visits every vertex exactly once and traverses every edge of $\widehat R$. Set $x_e=1$ for every connector edge $e\in\widehat E_{\mathrm{con}}$ used by $C$, and $x_e=0$ for every other connector edge. Then $\overline x$ is the incidence vector of $C$, and hence $x$ is feasible for \textup{\textsc{RPP-LP}}. Its objective value is $\widehat c(C)$. By Lemma~\ref{lem:preprocess-opt}, $\widehat c(C)=\operatorname{OPT}_{\mathrm{RPP}}$ for the original RPP instance on $G$ with required edge set $R$. Therefore $L\le\operatorname{OPT}_{\mathrm{RPP}}$.
\end{proof}

\section{Maximum-entropy distributions}\label{sec:maxent}

\subsection{From the RPP LP to the spanning-tree polytope}\label{subsec:split-edge}

We use the following standard fact about \textup{\textsc{TSP-LP}}. If $x^0$ is feasible for \textup{\textsc{TSP-LP}} and $x^0_{e_0}=1$ for some edge $e_0$, then the restriction of $x^0$ obtained by deleting $e_0$ is a vector in the spanning-tree polytope. A convenient way to create such an edge is to split a vertex and introduce a zero-cost edge between the two copies, as is also done in the metric TSP algorithm of \citet{KarlinKleinOveisGharanSTOC2021,KarlinKleinOveisGharanOR2024}.

We apply a similar split to the optimum extreme point solution $x^*$ of \textup{\textsc{RPP-LP}}. Every required edge has coordinate one in its extension $\overline x^*$, but these edges must remain in the spanning trees sampled later. We therefore create an additional zero-cost edge of coordinate one by splitting one endpoint of a required edge. Choose the required edge $\widehat e_1=ab$, where $a=1^-$ and $b=1^+$. Replace $a$ by two vertices $a^r$ and $a^c$, make the required edge $\widehat e_1$ incident to $a^r$, make every connector edge formerly incident to $a$ incident to $a^c$, and add the zero-cost edge $e_0=a^ra^c$. We keep the notation $\widehat e_1$ and $\widehat R$ for the corresponding required edge and required-edge set after the split. Let $G^+=(V^+,E^+)$ be the resulting graph and let $c^+$ be its edge cost function.

Transfer the coordinates of $\overline x^*$ to the corresponding edges of $G^+$ and set $r^0_{e_0}=1$. Let $r^0\in\mathbb R^{E^+}$ be the resulting vector, and let $r\in\mathbb R^{E^+\setminus\{e_0\}}$ be its restriction to $E^+\setminus\{e_0\}$.

\begin{lemma}\label{lem:split-vector}
The vector $r^0$, extended by zero to the edges not contained in $G^+$, is feasible for \textup{\textsc{TSP-LP}} on the complete graph with vertex set $V^+$. Moreover, $r$ is a vector in the spanning-tree polytope of $(V^+,E^+\setminus\{e_0\})$. Every required edge has coordinate one in $r$, and $c^+(r)=L$.
\end{lemma}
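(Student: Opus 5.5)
We verify the four assertions of Lemma~\ref{lem:split-vector} in turn, using only the constraints of \textsc{RPP-LP} satisfied by $\overline x^*$ and the standard correspondence between cuts before and after splitting $a$.

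\emph{Degree constraints in $G^+$.} Every vertex other than $a^r,a^c$ keeps its incident edges and coordinates, so its degree is still $2$ by \eqref{eq:rpp-lp-degree}. The vertex $a^r$ is incident only to $\widehat e_1$ and $e_0$, both of coordinate one, so its degree is $2$. The vertex $a^c$ is incident to $e_0$ and to the connector edges formerly at $a$. By (P1), $\widehat e_1$ is the only required edge at $a$, so $\overline x^*(\delta(a))=2$ gives connector total $2-1=1$ at $a$. Hence $a^c$ also has degree $1+1=2$.

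\emph{Cut constraints.} Fix $\emptyset\ne S\subsetneq V^+$; by complementing we may assume $a^c\notin S$.
\begin{itemize}
\item If $a^r\notin S$, then $S$ corresponds to a proper nonempty subset of $\widehat V$ not containing $a$ with the same boundary. Its cut value is therefore at least $2$ by \eqref{eq:rpp-lp-cut}.
\item If $a^r\in S$ and $S=\{a^r\}$, the cut value is $2$ by the degree computation above.
\item Otherwise $S$ contains $a^r$ and some original vertex, but not $a^c$. Then $\delta(S)$ contains $e_0$ with coordinate $1$. Let $S'=(S\setminus\{a^r\})\cup\{a\}$, a proper nonempty subset of $\widehat V$. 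Its boundary differs from $\delta(S)$ only by $e_0$ and the connector edges at $a$: $\delta(S)$ contains $e_0$ and the connector edges from $a^c$ into $S$, while $\delta(S')$ contains the connector edges from $a$ to outside $S'$. The connector total at $a$ is $1$. So $r^0(\delta(S))=\overline x^*(\delta(S'))-y+(1-y)+1$, where $y\le 1$ is the connector mass from $a$ to outside $S'$. This gives $r^0(\delta(S))\ge 2-y+1-y\ge 2$ whenever $y\le 1/2$. That bound fails in general, so I would instead split the check by whether $\widehat e_1$ crosses $S$.
\end{itemize}
This bookkeeping is the main obstacle. The cleaner route is the standard one: every cut of $G^+$ either separates $a^r$ from $a^c$, and then contains $e_0$ plus a cut of the contracted graph or its shadow, or it does not, and then equals a cut of $\widehat G$. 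The care is in the separating case. I would handle it by checking that, when $\{a^r\}\ne S\ne V^+\setminus\{a^c\}$, moving $a^r$ to the side of $a^c$ changes the cut by $-1$ (edge $e_0$) and $\pm1$ (edge $\widehat e_1$). Then $r^0(\delta(S))=\overline x^*(\delta(S''))$ or $\overline x^*(\delta(S''))+2$ for a proper nonempty $S''\subseteq\widehat V$, both of which are at least $2$. Nonnegativity is immediate, so $r^0$ is feasible for \textsc{TSP-LP}.

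\emph{Spanning-tree polytope.} Apply the standard fact quoted in the text with the edge $e_0$. Explicitly, the degree sum gives $r^0(E^+)=|V^+|$, so $r(E^+\setminus\{e_0\})=|V^+|-1$. For $\emptyset\ne S\subsetneq V^+$, $2r(E(S))\le 2|S|-r^0(\delta(S))\le 2|S|-2$, using the degree and cut constraints. This gives \eqref{eq:tree-rank}, and nonnegativity is clear.

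\emph{Required edges and cost.} Required edges have coordinate one in $\overline x^*$, and these coordinates are transferred unchanged to $r$. Since $c^+_{e_0}=0$, we have $c^+(r)=c^+(r^0)=\widehat c(\overline x^*)$. The latter equals the objective value of $x^*$, namely $L$.
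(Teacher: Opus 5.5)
Your final argument is correct and is essentially the paper's proof. Moving $a^r$ to the side of $a^c$ changes the cut by $-1$ from $e_0$ and $\pm1$ from $\widehat e_1$, which is the paper's split on whether $b$ lies on the side of $a^r$, with comparison to the cut of $U=S\setminus\{a^r\}$ in $\widehat G$; your degree, spanning-tree and cost parts also match the paper. Your abandoned first computation also works: you dropped the $+1$ from $e_0$, and it actually gives $r^0(\delta(S))\ge 2+2-2y\ge 2$ whenever $S\ne V^+\setminus\{a^c\}$ (so that $S'$ is proper), so no pivot was needed.
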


\begin{proof}
The degree equations follow directly from the construction. The required edge incident to $a^r$ and the edge $e_0$ both have coordinate one. The connector edges incident to $a^c$ have total coordinate one by \eqref{eq:rpp-lp-degree}, and all other degree equations are unchanged.

For the cut constraints, a cut that does not separate $a^r$ and $a^c$ corresponds directly to a cut of $\widehat G$. Suppose that a cut separates them and, after taking its complement if necessary, that $a^r$ is inside the cut and $a^c$ is outside. If $b$ is also outside, then $e_0$ and $\widehat e_1$, both of coordinate one, already give cut value two. Otherwise, let $U\subseteq\widehat V\setminus\{a\}$ be the vertices on the same side as $a^r$. The cut $\delta_{\widehat G}(U)$ contains $\widehat e_1$, of coordinate one, and its remaining edges correspond exactly to the edges of the split cut other than $e_0$. By \eqref{eq:rpp-lp-cut}, these remaining edges have total coordinate at least one. Hence every nontrivial cut has $r^0$-value at least two, and $r^0$ is feasible for \textup{\textsc{TSP-LP}}.

Since $r^0_{e_0}=1$, the degree equations give $r(E^+\setminus\{e_0\})=|V^+|-1$. For every nonempty proper $S\subsetneq V^+$,
\[
r(E_{(V^+,E^+\setminus\{e_0\})}(S))
\le r^0(E_{G^+}(S))
=|S|-\frac12r^0(\delta_{G^+}(S))
\le |S|-1.
\]
Together with nonnegativity, these are the spanning-tree polytope inequalities. The required-edge coordinates are unchanged, and $e_0$ has cost zero, so $c^+(r)=L$.
\end{proof}

\subsection{Sampling from maximum entropy distributions}\label{subsec:maxent-sampling}

We first define the spanning-tree distribution used in the analysis.

\begin{definition}[Maximum entropy distribution]\label{def:maxent}
Let $H=(W,F)$ be a connected graph, let $\mathcal T(H)$ denote the family of spanning trees of $H$, and let $q\in\mathbb R^F$ be a vector in the spanning-tree polytope of $H$. A probability distribution $\mu$ on $\mathcal T(H)$ has marginals $q$ if $\Pr_{T\sim\mu}[e\in T]=q_e$ for every $e\in F$. The entropy of $\mu$ is $\operatorname{Ent}(\mu):=-\sum_{T\in\mathcal T(H)}\mu(T)\log\mu(T)$, where $0\log0:=0$. The maximum entropy distribution with marginals $q$, denoted by $\mu_q$, is the distribution that maximizes $\operatorname{Ent}(\mu)$ over all spanning-tree distributions with marginals $q$.
\end{definition}

Since $q$ is in the spanning-tree polytope, a spanning-tree distribution with marginals $q$ exists. The strict concavity of entropy implies that $\mu_q$ is unique. If $\mu$ and $\nu$ are two distributions on the same family $\mathcal T(H)$ of spanning trees, define their total variation distance by $\operatorname{TV}(\mu,\nu):=\frac12\sum_{T\in\mathcal T(H)}|\mu(T)-\nu(T)|$.

The improvement from the maximum entropy distribution comes from parity correction. If $Q=\operatorname{odd}(T)$, a cut $S$ imposes a constraint in the $Q$-join polyhedron only when $|T\cap\delta(S)|$ is odd. The analysis of \citet{KarlinKleinOveisGharanSTOC2021,KarlinKleinOveisGharanOR2024} exploits the probabilistic structure of the maximum entropy distribution to obtain slack in these parity-correction constraints and thereby reduce the expected $Q$-join cost below the classical $1/2$ bound. The technical theorem used for this purpose is stated in Section~\ref{sec:algorithm} in the form needed here.

All spanning-tree statements in this subsection also apply to loopless multigraphs, with parallel edges treated as distinct.

The exact distribution $\mu_q$ is used in the approximation analysis. For the polynomial-time algorithm, it is enough to sample from a distribution that is sufficiently close to $\mu_q$ and approximately preserves its marginals. The implementation in \citet{KarlinKleinOveisGharanOR2024} combines the algorithm of \citet{AsadpourGoemansMadryOveisGharanSaberi2017} for approximately prescribed spanning-tree marginals with the stability of maximum entropy distributions proved by \citet{StraszakVishnoi2019}. We use the following consequence.

\begin{lemma}[Consequence of \citet{KarlinKleinOveisGharanOR2024} and \citet{StraszakVishnoi2019}]\label{lem:approx-sampling}
Let $H=(W,F)$ be a connected graph, let $q\in\mathbb R^F$ be a vector in its spanning-tree polytope, and let $\mu_q$ be the maximum entropy distribution with marginals $q$. For any fixed $\rho,\delta>0$, one can sample from a distribution $\widehat\mu_q$ on $\mathcal T(H)$ such that
\[
\operatorname{TV}(\widehat\mu_q,\mu_q)\le\rho,
\qquad
\Pr_{T\sim\widehat\mu_q}[e\in T]\le(1+\delta)q_e
\quad\text{for every }e\in F.
\]
The sampling can be implemented in randomized time polynomial in the size of $H$ and in the logarithm of the inverse minimum positive coordinate of $q$.
\end{lemma}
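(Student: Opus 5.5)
The lemma is stated as a consequence of known results, so the plan is to assemble it from three ingredients and make explicit how the parameters $\rho,\delta$ enter the running time.

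\textbf{Step 1: approximate the maximizer's parameters.} Recall that $\mu_q$ has product form: there are weights $\lambda\in\mathbb R_{>0}^F$ with $\mu_q(T)\propto\prod_{e\in T}\lambda_e$. This holds once every coordinate of $q$ is positive, which we may assume by deleting the edges with $q_e=0$ from $H$; these edges have probability zero under $\mu_q$. If $q$ lies on a face where some $q(E(S))=|S|-1$, we work on the minimal face and allow the weights to approach zero or infinity along tight sets. To find approximate weights, we invoke the algorithm of \citet{AsadpourGoemansMadryOveisGharanSaberi2017}. For a prescribed accuracy $\delta'>0$ it computes, in time polynomial in $|F|$, $\log(1/\min_e q_e)$ and $\log(1/\delta')$, weights $\widehat\lambda$ whose $\lambda$-uniform distribution $\widehat\mu_q$ has marginals $\widehat q_e\le(1+\delta')q_e$ for every $e$.

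\textbf{Step 2: transfer marginal closeness to closeness in distribution.} Here we use the stability theorem of \citet{StraszakVishnoi2019}, in the form applied in \citet{KarlinKleinOveisGharanOR2024}. It says that if the marginals of a product distribution are within a multiplicative $(1\pm\delta')$ of $q$, then its weights are close to those of $\mu_q$ in $\ell_\infty$ of the log-weights. The bound degrades polynomially in $|F|$ and in $\log(1/\min_e q_e)$. Closeness of log-weights then gives a bound on $\operatorname{KL}(\widehat\mu_q\|\mu_q)$, and Pinsker's inequality turns this into $\operatorname{TV}(\widehat\mu_q,\mu_q)\le\rho$. To achieve this, choose $\delta'=\min\{\delta,\rho'\}$ with $\rho'$ inverse-polynomially small in the stated parameters. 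Because the dependence of the running time on $\delta'$ is only through $\log(1/\delta')$, the total time stays polynomial for fixed $\rho,\delta$.

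\textbf{Step 3: sample.} Sampling a tree exactly from a given product (weighted-uniform) distribution takes polynomial time, for example by repeatedly fixing edges and computing conditional probabilities with the matrix-tree theorem. The sampled tree is then drawn from $\widehat\mu_q$, which satisfies both required properties.

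\textbf{Main obstacle.} The hardest step is Step~2. Converting multiplicative marginal error into total-variation closeness requires the quantitative stability bound, and that bound depends on the minimum positive coordinate of $q$. This is exactly why the running time is stated in terms of $\log(1/\min_e q_e)$, and why the paper insists that $x^*$ is an extreme point: it guarantees this logarithm is polynomial in the input size. I would take this step directly from the cited works rather than reprove it. I would only check that their hypotheses apply when $q$ is an arbitrary point of the spanning-tree polytope of a multigraph, with parallel edges treated as distinct.
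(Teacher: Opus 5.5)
Your three steps (weights from Asadpour et al.\ with a one-sided marginal guarantee, Straszak--Vishnoi stability to pass to total variation, exact sampling of the resulting $\lambda$-uniform distribution via the matrix-tree theorem) are exactly the combination the paper cites from Karlin, Klein, and Oveis Gharan. The paper offers no argument beyond that citation. The gap is in your account of Step~2, which fails as written in exactly the case that matters.

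You describe the stability theorem as giving $\ell_\infty$-closeness of the log-weights of $\widehat\mu_q$ to those of $\mu_q$, and then you bound $\operatorname{KL}(\widehat\mu_q\|\mu_q)$ and apply Pinsker. Suppose $q$ lies on a proper face of the spanning-tree polytope, i.e.\ $q(E(S))=|S|-1$ for some $S$ with $2\le|S|<|W|$ (for instance, $q_e=1$ on an edge that is not a bridge of the support graph). Then $\mu_q$ is supported on trees with $|T\cap E(S)|=|S|-1$, so it has no product form with finite positive weights. In particular, your opening claim that positivity of all coordinates suffices for product form is false; you need $q$ in the relative interior. The Step~1 distribution has finite positive weights on all support edges, so it charges every spanning tree of the support graph, including trees with $|T\cap E(S)|<|S|-1$. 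Whenever such trees exist, $\operatorname{KL}(\widehat\mu_q\|\mu_q)=+\infty$, and the ``log-weights of $\mu_q$'' are undefined. Even in the relative interior, marginal closeness gives no polynomial control of log-weights. The reason is that the Jacobian of the map from log-weights to marginals is the covariance matrix of the edge indicators, and it degenerates near the boundary. This is not a corner case: the lemma is stated for every point of the polytope. Moreover, tight cut constraints of the extreme point $x^*$ become tight rank constraints of $r$ through the degree identity used for the split vector, and those that are unions of contracted classes survive in $r'$.

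The repair is to use the stability theorem in its actual form.
\begin{itemize}
\item By Gibbs' inequality, a $\lambda$-uniform distribution is the maximum entropy distribution with its own marginals $\widehat q$, so $\widehat\mu_q=\mu_{\widehat q}$.
\item Step~1 gives only $\widehat q_e\le(1+\delta')q_e$, not the two-sided bound you assume. Combined with $\widehat q(F)=q(F)=|W|-1$, this yields $\|\widehat q-q\|_1\le2\delta'(|W|-1)$.
\item Straszak and Vishnoi bound $\operatorname{TV}(\mu_{q'},\mu_q)$ directly, by roughly $\mathrm{poly}(|F|,\log(1/\epsilon))\sqrt{\epsilon}$ when $q,q'$ are at distance $\epsilon$. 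The bound holds for arbitrary points of the polytope, boundary included.
\end{itemize}
The mechanism behind their bound is a norm bound on $\epsilon$-optimal dual solutions, valid for every marginal vector. This is combined with the identity $\operatorname{KL}(\mu_q\|\mu_\gamma)=\log Z(\gamma)-\langle q,\gamma\rangle-\operatorname{Ent}(\mu_q)$, where $\mu_\gamma(T)\propto e^{\gamma(T)}$ and $Z(\gamma)=\sum_T e^{\gamma(T)}$. The right-hand side is the dual suboptimality of $\gamma$. Note that the divergence is taken in this direction, which stays finite on the boundary. Choosing $\delta'\le\delta$ polynomially small in $|F|$ and $\rho$ then proves the lemma. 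Finally, the $\log(1/\min_e q_e)$ term in the running time comes from the Asadpour et al.\ step, not from the stability bound as you suggest.
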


We now apply this sampling result to the vector $r$ from Lemma~\ref{lem:split-vector}. Let $\mu_r$ denote the maximum entropy distribution with marginals $r$ on the spanning trees of $(V^+,E^+\setminus\{e_0\})$. Since $\mu_r$ has marginals $r$, $\mathbb E_{T\sim\mu_r}[c^+(T)]=c^+(r)=L$. Thus the maximum entropy distribution does not improve the expected tree cost; the gain comes from the parity correction. Since $r_e=1$ for every required edge $e\in\widehat R$, every tree in the support of $\mu_r$ contains all required edges. To preserve this property exactly in the polynomial-time sampling step, we contract the required edges before sampling and restore them afterwards.

Starting from $(V^+,E^+\setminus\{e_0\})$, contract every edge of $\widehat R$. For each edge $e=uv\in E^+\setminus(\widehat R\cup\{e_0\})$, retain $e$ as a distinct edge whose endpoints are the vertices containing $u$ and $v$ after the contraction. Thus different original edges remain distinct even if they become parallel. No loop is created, because there is no edge in $E^+\setminus(\widehat R\cup\{e_0\})$ joining the two endpoints of the same required edge. Let $G'=(V',E')$ be the resulting loopless multigraph. Each edge $e'\in E'$ corresponds to a unique edge $e\in E^+\setminus(\widehat R\cup\{e_0\})$; define $r'_{e'}:=r_e$.

\begin{lemma}[Sampling after contracting required edges]\label{lem:contract-required}
For any fixed $\rho,\delta>0$, one can sample in randomized polynomial time from a distribution $\widehat\mu_r$ on the spanning trees of $(V^+,E^+\setminus\{e_0\})$ such that:
\begin{enumerate}[label=\textup{(\roman*)}]
\item every tree in the support of $\widehat\mu_r$ contains every edge of $\widehat R$;
\item $\operatorname{TV}(\widehat\mu_r,\mu_r)\le\rho$;
\item $\Pr_{T\sim\widehat\mu_r}[e\in T]\le(1+\delta)r_e$ for every $e\in E^+\setminus\{e_0\}$.
\end{enumerate}
\end{lemma}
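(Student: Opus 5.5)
The plan is to reduce to Lemma~\ref{lem:approx-sampling} applied to the contracted multigraph $G'$ with vector $r'$, and then lift trees back by adding $\widehat R$.

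The first ingredient is a bijection between spanning trees. The bijection is $\Phi:T'\mapsto T'\cup\widehat R$ (identifying each edge of $E'$ with its original edge). It maps spanning trees of $G'$ onto exactly those spanning trees of $H^+:=(V^+,E^+\setminus\{e_0\})$ that contain $\widehat R$. Injectivity is immediate. For the remaining claims we use two facts. First, $\widehat R$ is a forest, indeed a matching, so contracting it preserves acyclicity and connectivity in both directions. Second, $|V'|-1=|V^+|-1-m$.

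The second ingredient is checking that $r'$ lies in the spanning-tree polytope of $G'$. Since $\mu_r$ is a distribution with marginals $r$ and $r_e=1$ on $\widehat R$, every tree in its support contains $\widehat R$. Hence the pushforward $\mu'$ of $\mu_r$ under $\Phi^{-1}$ is a spanning-tree distribution on $G'$ with marginals $r'$, so $r'$ is in that polytope.

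The third ingredient identifies the maximum entropy distributions. I would show $\mu_{r'}=\Phi^{-1}_*\mu_r$. Every distribution on trees of $G'$ with marginals $r'$ pushes forward under $\Phi$ to a distribution on trees of $H^+$ with marginals $r$, and $\Phi$ preserves entropy. Conversely, every distribution with marginals $r$ is supported on trees containing $\widehat R$, so it pulls back with the same entropy. The two feasible sets are therefore in entropy-preserving bijection, and the maximizers correspond.

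With these in hand, I invoke Lemma~\ref{lem:approx-sampling} on $(G',r')$ to obtain $\widehat\mu_{r'}$ and define $\widehat\mu_r:=\Phi_*\widehat\mu_{r'}$. Property (i) holds by construction. For (ii), total variation is preserved under the injective map $\Phi$, so $\operatorname{TV}(\widehat\mu_r,\mu_r)=\operatorname{TV}(\widehat\mu_{r'},\mu_{r'})\le\rho$. For (iii), the required edges have probability $1=r_e\le(1+\delta)r_e$, and every other edge inherits the bound from its image in $E'$. Here $G'$ may be a multigraph, which is allowed by the remark on loopless multigraphs, and it is connected because $H^+$ is.

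The main obstacle I expect is the running time. Lemma~\ref{lem:approx-sampling} is polynomial in $\log(1/\min_e r'_e)$ over positive coordinates, so I must bound the encoding size of $x^*$. This is where the extreme-point property from Lemma~\ref{lem:rpp-lp-lower} enters. An extreme point of \textsc{RPP-LP} is the unique solution of a linear system drawn from constraints with $0/\pm1$ coefficients and small integer right-hand sides. By Cramer's rule, each nonzero coordinate is a ratio of integers whose size is polynomial in $|\widehat V|$, giving $\log(1/\min r'_e)=\mathrm{poly}(|\widehat V|)$.

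Two further points need care. Edges with $r'_e=0$ should be deleted beforehand; this is harmless since they have zero marginal, and $G'$ stays connected because $r'$ is in its spanning-tree polytope. The cases where $G'$ has a single vertex are trivial.
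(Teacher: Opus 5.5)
Your proposal is correct and follows essentially the same route as the paper. It uses the contraction bijection between spanning trees of $G'$ and spanning trees of $(V^+,E^+\setminus\{e_0\})$ containing $\widehat R$, the entropy-preserving correspondence identifying $\mu_{r'}$ with the contraction of $\mu_r$, and Lemma~\ref{lem:approx-sampling} applied to $(G',r')$ followed by lifting, with the extreme-point determinant (Cramer) bound giving polynomial running time. Your extra remarks are harmless refinements: deleting zero-coordinate edges is optional, and the single-vertex case cannot occur since $|V'|=m+1\ge 3$.
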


\begin{proof}
Let $\mathcal T_R$ be the family of spanning trees of $(V^+,E^+\setminus\{e_0\})$ that contain every edge of $\widehat R$, and let $\mathcal T'$ be the family of spanning trees of $G'$. We first describe the correspondence between these two families.

Let $T\in\mathcal T_R$. Contract the edges of $\widehat R$ and remove them from $T$. The resulting edge set is connected and contains $(|V^+|-1)-|\widehat R|=|V'|-1$ edges, and hence is a spanning tree of $G'$. Conversely, let $T'\in\mathcal T'$. Replace every edge $e'\in T'$ by its unique corresponding edge before the contraction and add all edges of $\widehat R$. The resulting edge set is connected and has $(|V'|-1)+|\widehat R|=|V^+|-1$ edges, and hence is a spanning tree of $(V^+,E^+\setminus\{e_0\})$ containing $\widehat R$. These two operations are inverse to each other. Thus contraction gives a bijection between $\mathcal T_R$ and $\mathcal T'$. The fact that parallel edges in $G'$ are kept distinct is essential for this correspondence, since each edge of $T'$ then determines a unique edge before the contraction.

Every spanning-tree distribution with marginals $r$ is supported on $\mathcal T_R$, because $r_e=1$ for every $e\in\widehat R$. In particular, contracting a tree drawn from $\mu_r$ gives a distribution on $\mathcal T'$. For every $e'\in E'$ corresponding to $e$, the marginal probability of $e'$ under this distribution is $r_e=r'_{e'}$. Hence $r'$ is a vector in the spanning-tree polytope of $G'$.

More generally, the bijection gives a one-to-one correspondence between spanning-tree distributions with marginals $r$ and spanning-tree distributions on $G'$ with marginals $r'$. Indeed, contraction preserves the marginals of all nonrequired edges, while in the reverse direction every edge of $\widehat R$ is added to every lifted tree and therefore has marginal one. Corresponding distributions assign the same probabilities to corresponding trees and consequently have the same entropy. By Definition~\ref{def:maxent}, contracting $T\sim\mu_r$ therefore gives exactly the maximum entropy distribution $\mu_{r'}$ with marginals $r'$ on $G'$, and $\mu_r$ is recovered by lifting $T'\sim\mu_{r'}$ and adding all edges of $\widehat R$.

Since $x^*$ is an extreme point of \textup{\textsc{RPP-LP}}, the standard determinant bound implies that the logarithm of the inverse minimum positive coordinate of $x^*$ is polynomially bounded in the input size. The positive coordinates of $r'$ are inherited from $x^*$, so Lemma~\ref{lem:approx-sampling} can be applied to $r'$ in polynomial time. Let $\widehat\mu_{r'}$ be the resulting distribution. Define $\widehat\mu_r$ by sampling $T'\sim\widehat\mu_{r'}$, replacing each edge of $T'$ by its corresponding edge before the contraction, and adding all edges of $\widehat R$. Every tree in the support of $\widehat\mu_r$ therefore contains $\widehat R$, proving (i).

Since contraction and lifting are inverse bijections and preserve the probability assigned to corresponding trees, $\operatorname{TV}(\widehat\mu_r,\mu_r)=\operatorname{TV}(\widehat\mu_{r'},\mu_{r'})\le\rho$, which proves (ii). For every $e\in E^+\setminus(\widehat R\cup\{e_0\})$, let $e'$ be the corresponding edge of $G'$. Then
\[
\Pr_{T\sim\widehat\mu_r}[e\in T]
=
\Pr_{T'\sim\widehat\mu_{r'}}[e'\in T']
\le(1+\delta)r'_{e'}
=(1+\delta)r_e.
\]
If $e\in\widehat R$, then $\Pr_{T\sim\widehat\mu_r}[e\in T]=1=r_e$, so (iii) also holds for the required edges.
\end{proof}

\section{The algorithm and approximation analysis}\label{sec:algorithm}

We now state the algorithm obtained by adapting the technique of sampling from maximum entropy distributions to the RPP. The accuracy parameters $\rho$ and $\delta$ are absolute constants chosen sufficiently small in the proof of Theorem~\ref{thm:main}.

\begin{algorithm}[H]
\caption{Approximation algorithm for RPP}\label{alg:rpp}
\begin{algorithmic}[1]
\Require A connected undirected graph $G=(V,E)$, nonnegative edge costs $c$, and required edge set $R$.
\Ensure A feasible RPP closed walk.
\If{$R=\emptyset$}
    \State return the empty closed walk.
\EndIf
\If{$R=\{ab\}$}
    \State return $ab$ together with a shortest path from $b$ to $a$.
\EndIf
\State Construct the preprocessed graph $\widehat G=(\widehat V,\widehat E)$ and the required edge set $\widehat R$.
\State Solve \textup{\textsc{RPP-LP}} and let $x^*$ be an optimum extreme point solution of value $L$.
\State Perform the split of Section~\ref{subsec:split-edge}, obtaining $G^+=(V^+,E^+)$, the zero-cost edge $e_0$, and the vector $r$ in the spanning-tree polytope of $(V^+,E^+\setminus\{e_0\})$.
\State Sample a spanning tree $T\sim\widehat\mu_r$ according to Lemma~\ref{lem:contract-required}.
\State Let $Q=\operatorname{odd}(T)$ and compute a minimum-cost $Q$-join $J_T$ in $G^+$ with costs $c^+$.
\State Identify $a^r$ and $a^c$ in $T\uplus J_T$, set the multiplicity of $e_0$ to zero, replace every required edge by the corresponding original required edge, and expand every connector edge into its shortest path in $G$, obtaining a connected Eulerian multigraph containing every edge of $R$.
\State Compute an Eulerian closed walk $W$ in this multigraph and return $W$.
\end{algorithmic}
\end{algorithm}

The strict improvement in the parity-correction bound is obtained from the following technical theorem of \citet[Theorem~3.1]{KarlinKleinOveisGharanOR2024}. It gives the quantitative slack statement for a spanning tree drawn from the maximum entropy distribution defined above. We state it in the notation needed for our analysis.

\begin{lemma}[\citet{KarlinKleinOveisGharanOR2024}]\label{lem:karlin-main}
Let $\xi^0$ be a feasible solution of \textup{\textsc{TSP-LP}} on a metric complete graph, and suppose that an edge $e_0=u_0v_0$ has cost zero and $\xi^0_{e_0}=1$. Let $E(\xi^0):=\{e\ne e_0: \xi^0_e>0\}$, and let $\xi$ be the restriction of $\xi^0$ to $E(\xi^0)$. Let $C^\star$ be an optimum TSP tour containing $e_0$, let $E(C^\star)$ denote its edge set, and define $\zeta=(\xi+\chi_{C^\star})/2$, where $\xi$ is extended by zero outside $E(\xi^0)$. Let $\mu_\xi$ be the maximum entropy distribution with marginals $\xi$.

For $\eta\le10^{-12}$ and $\beta>0$, there exist random functions $s:E(\xi^0)\cup\{e_0\}\to\mathbb R$ and $s^*:E(C^\star)\to\mathbb R_{\ge0}$, as functions of $T\sim\mu_\xi$, such that:
\begin{enumerate}[label=\textup{(\roman*)}]
\item $s_e\ge-\beta\xi_e$ for every $e\in E(\xi^0)$;
\item if $S\subseteq V$, $u_0,v_0\notin S$, $\zeta(\delta(S))\le2+\eta$, and $|T\cap\delta(S)|$ is odd, then $s(\delta(S))+s^*(\delta(S))\ge0$;
\item $\mathbb E[s^*_{e^*}]\le218\eta\beta$ for every $e^*\in E(C^\star)$, and $\mathbb E[s_e]\le-\frac13\epsilon_{\mathrm P}\beta\xi_e$ for every $e\in E(\xi^0)$, where $\epsilon_{\mathrm P}=3.12\cdot10^{-16}$ is the constant defined in Equation~(7.4) of \citet{KarlinKleinOveisGharanOR2024}.
\end{enumerate}
\end{lemma}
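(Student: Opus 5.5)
This lemma is a restatement of \citet[Theorem~3.1]{KarlinKleinOveisGharanOR2024}, so I would prove it by translating notation rather than re-deriving the probabilistic argument. First I would recall the setting of that paper. There one starts from an extreme point of \textup{\textsc{TSP-LP}}, splits a vertex to obtain an edge $e_0=u_0v_0$ of cost zero and value one, and deletes $e_0$. The resulting vector $x$ lies in the spanning-tree polytope, and one samples $T\sim\mu_x$. An optimum tour $x^*$ is fixed to contain $e_0$, and the near-minimum cuts are defined with respect to the averaged vector $(x+x^*)/2$. I would match these objects one by one: $x\leftrightarrow\xi$, $x^*\leftrightarrow\chi_{C^\star}$, the averaged vector $\leftrightarrow\zeta$, and the support $E(x_0)\leftrightarrow E(\xi^0)$. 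I would also check that the maximum entropy distribution in their paper is the same object as in Definition~\ref{def:maxent}. It is, since both are defined as the entropy maximizer over spanning-tree distributions with the prescribed marginals.

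Second, I would verify the three conclusions against the corresponding statements there, in this order.
\begin{enumerate}[label=(\alph*)]
\item The lower bound $s_e\ge-\beta\xi_e$ is their bounded-negativity condition on the slack vector.
\item The cut condition in (ii) is their guarantee for every $\eta$-near-minimum cut $S$ that avoids $u_0,v_0$ and is odd in $T$. Their condition $\zeta(\delta(S))\le2+\eta$ is exactly the near-minimality with respect to the averaged vector.
\item The expectation bounds in (iii) are their bounds: $218\eta\beta$ on $\mathbb E[s^*_{e^*}]$, and $-\tfrac13\epsilon_{\mathrm P}\beta\xi_e$ on $\mathbb E[s_e]$, with $\epsilon_{\mathrm P}$ taken from their Equation~(7.4) and the range $\eta\le10^{-12}$.
\end{enumerate}
I would also check their convention for orienting cuts. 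If a cut containing $u_0$ or $v_0$ is handled there by complementation, then the restriction $u_0,v_0\notin S$ here is the same convention. Otherwise it is merely a weakening of their statement.

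The point I expect to need the most care is the role of the metric hypothesis and of $C^\star$, because of how the lemma will later be applied. I would check that the proof of their Theorem~3.1 uses the cost function only to make $C^\star$ an optimum tour. The construction of $s$ and $s^*$ and all three properties depend only on the feasibility of $\xi^0$ for \textup{\textsc{TSP-LP}} and on $\chi_{C^\star}$ being the incidence vector of a Hamiltonian cycle through $e_0$. If so, I would record this explicitly. The cost-free statement then lets the lemma be used with $\xi^0=r^0$ from Lemma~\ref{lem:split-vector} and with $C^\star$ the split version of an optimum RPP tour in $\widehat G$, even though $\widehat c$ need not be metric.

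Finally, I would note that the support of $\mu_\xi$ is $E(\xi^0)$, and that edges of $\widehat R$ with value one lie in $E(\xi^0)$. The statements for those edges therefore hold verbatim.
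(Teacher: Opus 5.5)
Your proposal matches the paper, which gives no independent proof of this lemma and simply quotes \citet[Theorem~3.1]{KarlinKleinOveisGharanOR2024} in translated notation, using the same identification $x\leftrightarrow\xi$, $x^*\leftrightarrow\chi_{C^\star}$, $(x+x^*)/2\leftrightarrow\zeta$ that you describe. The point you single out for extra care, the metric hypothesis, is handled in the paper without inspecting the cited proof: conclusions (i)--(iii) do not involve costs, so in Lemma~\ref{lem:expected-join} the paper applies the lemma under an auxiliary metric (length zero on $e_0$, one on every other edge) that makes $C^+$ an optimum tour, and cost-independence then follows formally from the statement itself.
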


For the remainder of the analysis, take a minimum-cost tour in $\widehat G$ that visits every vertex exactly once and traverses every edge of $\widehat R$. At $a$, this tour uses the required edge $\widehat e_1$ and one connector edge. Splitting $a$ as in Section~\ref{subsec:split-edge} and inserting $e_0$ between these two edges gives a tour $C^+$ in $G^+$ containing $e_0$. By Lemma~\ref{lem:preprocess-opt}, $c^+(C^+)=\operatorname{OPT}_{\mathrm{RPP}}$.

We first bound the expected cost of the $Q$-join when $T\sim\mu_r$.

\begin{lemma}[Expected $Q$-join cost]\label{lem:expected-join}
Let $T\sim\mu_r$, let $Q=\operatorname{odd}(T)$, and let $J_T$ be a minimum-cost $Q$-join in $G^+$ with costs $c^+$. If $0<\beta\le\eta/4.1$ and $\eta\le10^{-12}$, then
\[
\mathbb E[c^+(J_T)]
\le
\frac{L+\operatorname{OPT}_{\mathrm{RPP}}}{4}
-\frac13\epsilon_{\mathrm P}\beta L
+218\eta\beta\operatorname{OPT}_{\mathrm{RPP}}.
\]
\end{lemma}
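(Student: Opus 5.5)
The plan is to follow the parity-correction argument of \citet{KarlinKleinOveisGharanOR2024}. For each sampled tree we exhibit an explicit fractional point of the $Q$-join polyhedron whose expected cost matches the right-hand side. We then use the fact that the minimum-cost $Q$-join is at most the cost of any point of that polyhedron.

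\textbf{Setting up Lemma~\ref{lem:karlin-main}.} We take $\xi^0:=r^0$, which is feasible for \textsc{TSP-LP} by Lemma~\ref{lem:split-vector}, with the zero-cost edge $e_0=a^ra^c$ of coordinate one. Then $\xi$ is $r$ restricted to its support, so $\mu_\xi$ is exactly $\mu_r$. For the tour we take $C^\star:=C^+$, which contains $e_0$ and has $c^+(C^+)=\operatorname{OPT}_{\mathrm{RPP}}$.

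One point needs care. Lemma~\ref{lem:karlin-main} is stated for a metric complete graph, whereas $c^+$ need not be metric by (P2). Properties (i)--(iii) do not involve the cost function at all: they are statements about $\xi^0$, $C^\star$, $\mu_\xi$ and cuts. I would therefore argue that the lemma applies verbatim; if needed, one can apply it with an arbitrary metric cost on the same vertex set. Likewise, optimality of $C^\star$ enters only through cost bounds, which we supply ourselves. I expect justifying this transfer to be the main obstacle, and I would check that the construction of $s,s^*$ in \citet{KarlinKleinOveisGharanOR2024} uses only the combinatorial structure.

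\textbf{The fractional point.} With $\zeta=(\xi+\chi_{C^+})/2$, define
\[
y:=\tfrac12\zeta+s+s^*,
\]
with all vectors extended by zero. Nonnegativity holds because $s^*\ge0$, $s_e\ge-\beta\xi_e$, and $\tfrac12\zeta_e\ge\tfrac14\xi_e\ge\beta\xi_e$ since $\beta\le1/4$. On $e_0$ any value of $s$ is harmless, since $c^+_{e_0}=0$; we may clip there if needed.

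\textbf{Checking the cut constraints.} Take $S$ with $|S\cap Q|$ odd, equivalently $|T\cap\delta(S)|$ odd. Since $|Q|$ is even, we may replace $S$ by its complement, so we may assume $a^r,a^c\notin S$, as $e_0$ is not cut. There are two cases.
\begin{itemize}
\item If $\zeta(\delta(S))\le2+\eta$, then $\tfrac12\zeta(\delta(S))\ge1$ by the cut constraints, and (ii) gives $s(\delta(S))+s^*(\delta(S))\ge0$. Hence $y(\delta(S))\ge1$.
\item Otherwise $\zeta(\delta(S))>2+\eta$. Since $C^+$ crosses $\delta(S)$ at least twice, $\xi(\delta(S))\le2\zeta(\delta(S))-2$. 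Then
\[
y(\delta(S))\ge\tfrac12\zeta(\delta(S))-\beta\bigl(2\zeta(\delta(S))-2\bigr)=(\tfrac12-2\beta)\zeta(\delta(S))+2\beta.
\]
This is increasing in $\zeta(\delta(S))$, so it is at least $1+\eta/2-2\beta(1+\eta)$. That quantity is $\ge1$ whenever $\beta\le\eta/(4(1+\eta))$, which follows from $\beta\le\eta/4.1$ and $\eta\le10^{-12}$.
\end{itemize}
So $y$ lies in the $Q$-join polyhedron, and therefore $c^+(J_T)\le c^+(y)$ pointwise.

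\textbf{Taking expectations.} We have $c^+(\tfrac12\zeta)=\tfrac14(c^+(r)+c^+(C^+))=(L+\operatorname{OPT}_{\mathrm{RPP}})/4$. By (iii), $\mathbb E[c^+(s)]\le-\tfrac13\epsilon_{\mathrm P}\beta\,c^+(r)=-\tfrac13\epsilon_{\mathrm P}\beta L$. Also by (iii), $\mathbb E[c^+(s^*)]\le218\eta\beta\,c^+(C^+)=218\eta\beta\operatorname{OPT}_{\mathrm{RPP}}$; this uses $c^+\ge0$ and that the cost of $e_0$ is zero. Summing the three terms gives the claimed bound.
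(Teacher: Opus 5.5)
There is a genuine gap: odd cuts with $e_0\in\delta(S)$ are never handled. On cuts with $e_0\notin\delta(S)$, your vector $y=\tfrac12\zeta+s+s^*$ is the paper's vector, and your two-case analysis and expectation bound match its proof.

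The sentence ``we may assume $a^r,a^c\notin S$, as $e_0$ is not cut'' asserts something false in general. In $(V^+,E^+\setminus\{e_0\})$ the vertex $a^r$ is incident only to $\widehat e_1$. So $a^r$ is a leaf of every sampled tree, and $S=\{a^r\}$ always has $|T\cap\delta(S)|=1$ with $\delta(S)=\{e_0,\widehat e_1\}$.

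For your vector, $\xi_{e_0}=0$ because the restriction $\xi$ excludes $e_0$. Hence $\zeta_{e_0}=\tfrac12$ and $\tfrac12\zeta(\delta(S))=\tfrac14+\tfrac12=\tfrac34$. More generally, for any $S$ separating $a^r$ from $a^c$ you only have $\xi(\delta(S))=r^0(\delta(S))-1\ge1$, so $\tfrac12\zeta(\delta(S))\ge\tfrac34$.

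Lemma~\ref{lem:karlin-main}(ii) does not apply to such $S$, since in neither orientation are both $u_0,v_0$ outside $S$. Moreover, $s_{e_0}$ is not constrained by (i), and $s_{\widehat e_1}$ may be as small as $-\beta$. So $y(\delta(S))\ge1$ can fail, and $y$ need not lie in the $Q$-join polyhedron.

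The repair uses the observation you already made, though you applied it only to nonnegativity: since $c^+_{e_0}=0$, the coordinate on $e_0$ is free. Set $y_{e_0}:=1$, discarding $s_{e_0}$; the paper puts $1+\tfrac14+s^*_{e_0}$ there. This leaves $c^+(y)$ unchanged and satisfies every odd cut through $e_0$. For the remaining cuts, where $e_0\notin\delta(S)$, your complement step, both cases, and the expectation computation go through verbatim.

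A smaller point concerns the metric. Your fallback of applying Lemma~\ref{lem:karlin-main} with ``an arbitrary metric'' does not make $C^+$ an optimum tour, so the hypotheses would still fail. The paper instead takes length $0$ on $e_0$ and $1$ on every other edge. Under that metric every tour through $e_0$ is optimal, so the hypotheses hold literally. Since (i)--(iii) do not mention costs, they can then be used with $c^+$, and you need not inspect the construction in \citet{KarlinKleinOveisGharanOR2024}.
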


\begin{proof}
For the sole purpose of matching the formal hypotheses of
Lemma~\ref{lem:karlin-main}, consider the metric on the complete graph
with vertex set $V^+$ in which $e_0$ has length zero and every
other edge has length one. Every TSP tour containing $e_0$ then has
length $|V^+|-1$, whereas every TSP tour not containing $e_0$ has
length $|V^+|$. Hence $C^+$ is an optimum TSP tour for this
metric. This auxiliary choice of edge lengths is purely formal: the
slack statement of Lemma~\ref{lem:karlin-main} used below is
independent of these lengths, and any auxiliary metric making
$C^+$ an optimum tour would serve equally well.

Let $s$ and $s^*$ be the random functions given by Lemma~\ref{lem:karlin-main}. Define a vector $y$ for the $Q$-join polyhedron as follows. Put one unit on $e_0$. For every edge $e\in E^+\setminus\{e_0\}$ with $r_e>0$, put $r_e/4+s_e$. For every edge $e\in E(C^+)$, add $1/4+s^*_e$. All remaining coordinates are zero. The vector $y$ is nonnegative because $s_e\ge-\beta r_e$, $\beta<1/4$, and $s^*\ge0$.

We show that $y$ is feasible for the $Q$-join polyhedron. Let $S\subseteq V^+$ satisfy $|S\cap Q|$ odd, or equivalently $|T\cap\delta(S)|$ odd. If $e_0\in\delta(S)$, the unit value on $e_0$ already satisfies the cut constraint. Otherwise, after replacing $S$ by its complement if necessary, both endpoints of $e_0$ lie outside $S$. Put $z^+=(r+\chi_{C^+})/2$, where $r$ is extended by zero to $e_0$.

If $z^+(\delta(S))\le2+\eta$, then Lemma~\ref{lem:karlin-main}(ii), together with $z^+(\delta(S))\ge2$, gives $y(\delta(S))\ge1$. Suppose instead that $z^+(\delta(S))>2+\eta$. Since $C^+$ is a tour, $\chi_{C^+}(\delta(S))\ge2$, and hence
\[
r(\delta(S))
=
2z^+(\delta(S))-\chi_{C^+}(\delta(S))
\le2\bigl(z^+(\delta(S))-1\bigr).
\]
Using $s_e\ge-\beta r_e$ and $s^*\ge0$, we obtain $y(\delta(S))\ge z^+(\delta(S))(1/2-2\beta)+2\beta$. The right-hand side is increasing in $z^+(\delta(S))$ for $\beta<1/4$. At $z^+(\delta(S))=2+\eta$, it equals $1+\eta/2-2\beta(1+\eta)$, which is at least one when $\eta\le10^{-12}$ and $\beta\le\eta/4.1$. Thus $y$ is feasible.

By the $Q$-join polyhedron in Section~\ref{subsec:polyhedra}, $c^+(J_T)\le c^+(y)$. Taking expectations and applying Lemma~\ref{lem:karlin-main}(iii), together with $c^+(r)=L$, $c^+(C^+)=\operatorname{OPT}_{\mathrm{RPP}}$, and $c^+_{e_0}=0$, proves the claim.
\end{proof}

The next lemma gives the feasibility property and the uniform $Q$-join bound needed for both the exact distribution $\mu_r$ and its approximation $\widehat\mu_r$.

\begin{lemma}\label{lem:feasibility-uniform}
Let $T$ be a spanning tree of $(V^+,E^+\setminus\{e_0\})$, let $Q=\operatorname{odd}(T)$, and let $J_T$ be a minimum-cost $Q$-join in $G^+$. If $T$ contains every edge of $\widehat R$, then the multiset union $T\uplus J_T$ can be transformed into a feasible RPP closed walk without increasing its cost. Moreover, for every spanning tree $T$ of $(V^+,E^+\setminus\{e_0\})$, $c^+(J_T)\le\operatorname{OPT}_{\mathrm{RPP}}/2$.
\end{lemma}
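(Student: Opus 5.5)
Both parts will be handled separately. The first part follows Steps 11--12 of Algorithm~\ref{alg:rpp}; the second uses the $Q$-join polyhedron bound from Section~\ref{subsec:polyhedra} with the tour $C^+$.

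\emph{Feasibility.} Let $T$ be a spanning tree of $(V^+,E^+\setminus\{e_0\})$ containing $\widehat R$, and put $M=T\uplus J_T$.
\begin{enumerate}[label=(\alph*)]
\item Since $\operatorname{odd}(J_T)=\operatorname{odd}(T)$, every vertex has even degree in $M$. Since $T$ spans $V^+$, $M$ is connected.
\item Identify $a^r$ and $a^c$ back to $a$. This preserves connectivity and evenness, because the degree of the merged vertex is the sum of two even degrees.
\item Delete all copies of $e_0$, which have become loops. Removing a loop subtracts $2$ from the degree and does not disconnect the multigraph. Since $c^+_{e_0}=0$, the cost is unchanged.
\item The result is a connected Eulerian multigraph on $\widehat V$ that contains every edge of $\widehat R$, because $T$ does.
\item Replace each $\widehat e_i$ by $e_i$ and each connector edge $uv$ by a shortest $\pi(u)$--$\pi(v)$ path in $G$. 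Each required edge is substituted by an edge, and each connector edge by a path between its projected endpoints. Hence degree parities at vertices of $G$ are preserved, since internal path vertices gain even degree, and connectivity is preserved. The cost equals $c^+(M)$ by the definition of $\widehat c$.
\item An Eulerian closed walk in this multigraph traverses every edge of $R$, so it is feasible for the RPP.
\end{enumerate}
One point needs care: $J_T$ may contain $e_0$, or connector edges incident to $a^c$. The merging step handles both cases.

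\emph{Uniform bound.} For an arbitrary spanning tree $T$, the set $Q=\operatorname{odd}(T)$ has even cardinality. The tour $C^+$ in $G^+$ passes through every vertex of $V^+$ and contains $e_0$. Take $y=\tfrac12\chi_{C^+}$.
\begin{enumerate}[label=(\alph*)]
\item For any $S\subseteq V^+$ with $|S\cap Q|$ odd, $S$ is a nonempty proper subset of $V^+$.
\item A Hamiltonian cycle crosses every such cut at least twice, so $y(\delta(S))\ge1$.
\item Hence $y$ lies in the $Q$-join polyhedron of $G^+$, and $c^+(J_T)\le c^+(y)=\tfrac12 c^+(C^+)=\operatorname{OPT}_{\mathrm{RPP}}/2$, using $c^+(C^+)=\operatorname{OPT}_{\mathrm{RPP}}$ as established before Lemma~\ref{lem:expected-join}.
\end{enumerate}
Note that $C^+$ uses only edges of $G^+$: required edges, connector edges, and $e_0$.

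I expect the main obstacle to be bookkeeping rather than any estimate. The parity argument in the feasibility part must correctly handle copies of $e_0$ in $J_T$ and the vertex identification at $a$. Everything else follows directly from the preprocessing properties (P1)--(P3) and Lemma~\ref{lem:preprocess-opt}.
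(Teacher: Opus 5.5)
Your proposal is correct and follows essentially the same route as the paper: parity and connectivity of $T\uplus J_T$, undoing the split so that copies of $e_0$ become zero-cost loops that can be removed, expanding required and connector edges into $G$, and then bounding $c^+(J_T)$ by the feasibility of $\tfrac12\chi_{C^+}$ in the $Q$-join polyhedron. Your extra bookkeeping makes explicit two points the paper leaves implicit: why $S$ must be a nonempty proper subset, and why identifying vertices preserves evenness.
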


\begin{proof}
Assume first that $T$ contains every edge of $\widehat R$. The multigraph $T\uplus J_T$ is connected, and every vertex has even degree because $\operatorname{odd}(J_T)=Q=\operatorname{odd}(T)$. Identifying $a^r$ and $a^c$ undoes the split; $e_0$ then becomes a zero-cost loop, so its multiplicity can be set to zero. Replacing every required edge by the corresponding original required edge and every connector edge by its shortest path in $G$ gives a connected Eulerian multigraph containing every edge of $R$. An Eulerian closed walk in this multigraph is therefore feasible for the original RPP instance.

For the second assertion, $\chi_{C^+}/2$ is feasible for the $Q$-join polyhedron, since every nontrivial cut is crossed by the tour $C^+$ a positive even number of times and hence at least twice. Therefore $c^+(J_T)\le c^+(C^+)/2=\operatorname{OPT}_{\mathrm{RPP}}/2$.
\end{proof}

We first analyze the idealized version of the algorithm in which $T\sim\mu_r$.

\begin{lemma}\label{lem:exact-estimate}
Let $\eta=\epsilon_{\mathrm P}/(6\cdot218)$, $\beta=\eta/4.1$, and $\epsilon_0=\epsilon_{\mathrm P}\beta/6$. If $T\sim\mu_r$, the resulting RPP solution has expected cost at most $\left(\frac32-\epsilon_0\right)\operatorname{OPT}_{\mathrm{RPP}}$.
\end{lemma}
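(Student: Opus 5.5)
The plan is to bound the cost of the output by $c^+(T)+c^+(J_T)$ and take expectations under $\mu_r$. Since $r_e=1$ for every required edge, every tree in the support of $\mu_r$ contains $\widehat R$. Lemma~\ref{lem:feasibility-uniform} then turns $T\uplus J_T$ into a feasible RPP closed walk of cost at most $c^+(T)+c^+(J_T)$. Because $\mu_r$ has marginals $r$, we have $\mathbb E[c^+(T)]=c^+(r)=L$ by Lemma~\ref{lem:split-vector}.

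For the join I would invoke Lemma~\ref{lem:expected-join} with the stated parameters. The choice $\beta=\eta/4.1$ satisfies its hypothesis, and since $\epsilon_{\mathrm P}=3.12\cdot 10^{-16}$ we get $\eta=\epsilon_{\mathrm P}/1308\le 10^{-12}$. This gives
\[
\mathbb E[\text{cost}]\le L+\frac{L+\operatorname{OPT}_{\mathrm{RPP}}}{4}-\frac13\epsilon_{\mathrm P}\beta L+218\eta\beta\operatorname{OPT}_{\mathrm{RPP}}
=\Bigl(\frac54-\frac{\epsilon_{\mathrm P}\beta}{3}\Bigr)L+\Bigl(\frac14+\frac{\epsilon_{\mathrm P}\beta}{6}\Bigr)\operatorname{OPT}_{\mathrm{RPP}},
\]
where I have used $218\eta=\epsilon_{\mathrm P}/6$. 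The coefficient of $L$ is positive, so $L\le\operatorname{OPT}_{\mathrm{RPP}}$ from Lemma~\ref{lem:rpp-lp-lower} yields
\[
\mathbb E[\text{cost}]\le\Bigl(\frac32-\frac{\epsilon_{\mathrm P}\beta}{3}+\frac{\epsilon_{\mathrm P}\beta}{6}\Bigr)\operatorname{OPT}_{\mathrm{RPP}}=\Bigl(\frac32-\epsilon_0\Bigr)\operatorname{OPT}_{\mathrm{RPP}}.
\]

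There is no real obstacle here: the argument is bookkeeping. The only points needing care are the following.
\begin{itemize}
\item Checking that the coefficient of $L$ stays positive before substituting $L\le\operatorname{OPT}_{\mathrm{RPP}}$.
\item Confirming that the expansion step in Algorithm~\ref{alg:rpp} does not increase cost, which is exactly what Lemma~\ref{lem:feasibility-uniform} provides.
\item Noting that the degenerate cases $m\le1$ are outside this lemma's scope. The lemma concerns the sampling branch, where $m\ge2$.
\end{itemize}
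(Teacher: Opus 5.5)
Your proposal is correct and follows the paper's proof essentially step for step. Both use $\mathbb E[c^+(T)]=c^+(r)=L$ from the marginals, Lemma~\ref{lem:expected-join} for the join, the identity $218\eta\beta=\epsilon_{\mathrm P}\beta/6$, the bound $L\le\operatorname{OPT}_{\mathrm{RPP}}$, and Lemma~\ref{lem:feasibility-uniform} for feasibility without cost increase. Your explicit check that the coefficient of $L$ is positive before substituting $L\le\operatorname{OPT}_{\mathrm{RPP}}$ is a point the paper leaves implicit.
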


\begin{proof}
The chosen value of $\eta$ is smaller than $10^{-12}$. By the marginal identity preceding Lemma~\ref{lem:contract-required}, $\mathbb E[c^+(T)]=c^+(r)=L$. Combining this identity with Lemma~\ref{lem:expected-join} gives
\[
\mathbb E[c^+(T)+c^+(J_T)]
\le
\left(\frac54-\frac13\epsilon_{\mathrm P}\beta\right)L
+\left(\frac14+218\eta\beta\right)\operatorname{OPT}_{\mathrm{RPP}}.
\]
Since $L\le\operatorname{OPT}_{\mathrm{RPP}}$ by Lemma~\ref{lem:rpp-lp-lower} and $218\eta\beta=\epsilon_{\mathrm P}\beta/6$, the right-hand side is at most $(3/2-\epsilon_0)\operatorname{OPT}_{\mathrm{RPP}}$. By Lemma~\ref{lem:feasibility-uniform}, the resulting multigraph can be transformed into a feasible RPP closed walk without increasing its cost.
\end{proof}

\begin{proof}[Proof of Theorem~\ref{thm:main}]
The cases $|R|=0$ and $|R|=1$ are solved optimally by the first two steps of Algorithm~\ref{alg:rpp}. Assume $|R|\ge2$. The preprocessing and the optimization of \textup{\textsc{RPP-LP}} are polynomial time by Lemma~\ref{lem:rpp-lp-lower}. The splitting operation is explicit, a tree $T\sim\widehat\mu_r$ can be sampled in polynomial time by Lemma~\ref{lem:contract-required}, and a minimum-cost $Q$-join can be computed in polynomial time by the Edmonds--Johnson theorem. By Lemmas~\ref{lem:contract-required} and~\ref{lem:feasibility-uniform}, every sampled tree yields a feasible RPP closed walk.

It remains to compare the polynomial-time distribution $\widehat\mu_r$ with the exact maximum entropy distribution $\mu_r$. By Lemma~\ref{lem:contract-required}, $\mathbb E_{T\sim\widehat\mu_r}[c^+(T)]\le (1+\delta)c^+(r)=(1+\delta)L$, so relative to the exact-distribution estimate the tree-cost bound increases by at most $\delta L\le\delta\operatorname{OPT}_{\mathrm{RPP}}$.

For every spanning tree $T$, Lemma~\ref{lem:feasibility-uniform} gives $0\le c^+(J_T)\le\operatorname{OPT}_{\mathrm{RPP}}/2$. Since $\operatorname{TV}(\widehat\mu_r,\mu_r)\le\rho$, replacing $\mu_r$ by $\widehat\mu_r$ changes the expected $Q$-join cost by at most $\rho\operatorname{OPT}_{\mathrm{RPP}}/2$. Hence $\mathbb E[\textup{output cost}]\le\left(\frac32-\epsilon_0+\delta+\frac{\rho}{2}\right)\operatorname{OPT}_{\mathrm{RPP}}$. Choose $\delta\le\epsilon_0/4$ and $\rho\le\epsilon_0/2$. The approximation ratio is then at most $3/2-\epsilon_0/2$. Since $\frac{\epsilon_0}{2}=\frac{\epsilon_{\mathrm P}^2}{72\cdot218\cdot4.1}>10^{-36}$, Theorem~\ref{thm:main} follows.
\end{proof}

\begin{remark}[LP-relative and deterministic variants]
\label{rem:lp-relative-derandomization}
Let $\epsilon_{\mathrm{KKO}}>10^{-36}$ be the constant in the
LP-relative analysis of \citet{KarlinKleinOveisGharanFOCS2022}. That
analysis also applies to the weighted 2-edge-connected multisubgraph
problem and hence does not require the cost function $c^+$ to be
metric. By Lemma~\ref{lem:split-vector}, $r^0$ is feasible for
\textup{\textsc{TSP-LP}}, $r^0_{e_0}=1$, and
$c^+(r^0)=c^+(r)=L$. Since every edge of $\widehat R$ has
coordinate one, every tree in the support of $\mu_r$ contains
$\widehat R$; Lemma~\ref{lem:contract-required} merely realizes this
same distribution after contracting those fixed edges. Applying the
cited LP-relative tree-and-parity-correction analysis to $r^0$, and
using a minimum $Q$-join in place of its feasible parity-correction
vector, gives
\[
\mathbb E_{T\sim\mu_r}
 \bigl[c^+(T)+c^+(J_T)\bigr]
\le
\left(\frac32-\epsilon_{\mathrm{KKO}}\right)L.
\]

The approximate distribution used by the polynomial-time implementation
preserves this bound up to the prescribed accuracy. Indeed, for every
spanning tree $T$, the vector $r^0/2$ is feasible for the
$\operatorname{odd}(T)$-join polyhedron, because
$r^0(\delta(S))\ge2$ for every nontrivial cut. Hence
$c^+(J_T)\le L/2$, and Lemma~\ref{lem:contract-required} yields
\[
\mathbb E_{T\sim\widehat\mu_r}
 \bigl[c^+(T)+c^+(J_T)\bigr]
\le
\left(\frac32-\epsilon_{\mathrm{KKO}}
      +\delta+\frac\rho2\right)L.
\]
Choosing the fixed parameters $\delta$ and $\rho$ sufficiently
small proves the LP-relative randomized guarantee with some absolute
constant $\epsilon>10^{-36}$.

For the deterministic counterpart, apply the conditional-expectation
derandomization of \citet{KarlinKleinOveisGharanIPCO2023} after
contracting all edges of \(\widehat R\). Since every edge of
\(\widehat R\) has marginal one, these edges introduce no random
choices: spanning trees of the contracted multigraph are in bijection
with spanning trees of
\((V^+,E^+\setminus\{e_0\})\) that contain all edges of
\(\widehat R\). Thus the conditional-expectation procedure fixes only
the remaining tree edges, while its LP-relative tree-and-parity-correction
objective is evaluated on the corresponding lifted tree. Parallel edges
created by the contraction are retained as distinct edges, so the
matrix-tree computations used in the cited procedure apply without
change.

The cited derandomization therefore returns a lifted spanning tree
\(T\supseteq\widehat R\) whose LP-relative upper bound is no larger
than the initial expectation. Replacing the parity-correction vector
appearing in that bound by a minimum-cost
\(\operatorname{odd}(T)\)-join can only decrease the cost. The metric
assumption in the TSP formulation is used only to shortcut the resulting
connected Eulerian multigraph to a tour. No such shortcut
is required here: after undoing the split, restoring the required edges,
and expanding the connector edges, the multigraph already yields a
feasible RPP closed walk by Lemma~\ref{lem:feasibility-uniform}.
Hence the same LP-relative better-than-\(3/2\) guarantee is attained
deterministically in polynomial time.
\end{remark}

\section{A general reduction from RPP to metric TSP}\label{sec:generic-reduction}

The preprocessing in Section~\ref{subsec:preprocessed-graph} expresses the original RPP instance on $G$ with required edge set $R$ as the problem of finding a minimum-cost tour in $\widehat G$ that traverses every edge of $\widehat R$. We now use the transformation described by \citet[Section~2.1]{Lampis2014} to remove the requirement that these edges must be traversed. We give the details needed here and later show that the same transformation also preserves approximation guarantees measured against the LP relaxation defined on the preprocessed RPP instance.

Let $c(R):=\sum_{e\in R}c_e=\sum_{i=1}^m\widehat c_{\widehat e_i}$, and fix an integer $k\ge2$. For each required edge $\widehat e_i=i^-i^+$, remove $\widehat e_i$ and replace it by a path $P_i:\ p_{i,0}=i^-,p_{i,1},\ldots,p_{i,k}=i^+$, where every edge $p_{i,j-1}p_{i,j}$ has cost $\widehat c_{\widehat e_i}/k$. Each new internal vertex of $P_i$ has degree two. Keep every connector edge of $\widehat G$ with its original cost. Let $H_k=(V_k,E_k)$ be the resulting graph and let $c^k$ be its edge cost function.

For $u,v\in V_k$, let $d_k(u,v)$ be the shortest-path distance from $u$ to $v$ in $H_k$ with respect to $c^k$. The complete graph on $V_k$ with edge costs $d_k$ is a metric TSP instance, which we denote by $I_k$. The distances $d_k$ can be computed in polynomial time.

\begin{lemma}\label{lem:subdivision-opt-upper}
The optimum tour cost of $I_k$ is at most $\operatorname{OPT}_{\mathrm{RPP}}$ for the original RPP instance on $G$ with required edge set $R$.
\end{lemma}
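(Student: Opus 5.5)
The plan is to construct an explicit Hamiltonian tour of $I_k$ from an optimal tour of $\widehat G$ and to compare costs edge by edge. By Lemma~\ref{lem:preprocess-opt}, there is a tour $C$ in $\widehat G$ that visits every vertex of $\widehat V$ exactly once, traverses every edge of $\widehat R$, and satisfies $\widehat c(C)=\operatorname{OPT}_{\mathrm{RPP}}$. Since $m\ge2$ and every vertex of $\widehat V$ lies on exactly one required edge by (P1), $C$ alternates between required edges and connector edges. We can therefore write $C$ as $\widehat e_{i_1},f_1,\widehat e_{i_2},f_2,\ldots,\widehat e_{i_m},f_m$, where each $\widehat e_{i_t}$ is traversed in some orientation and each $f_t$ is a connector edge joining the exit endpoint of $\widehat e_{i_t}$ to the entry endpoint of $\widehat e_{i_{t+1}}$ (indices taken cyclically).

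Next, we modify $C$ into a closed walk in $H_k$. Each required edge $\widehat e_{i_t}$ is replaced by the path $P_{i_t}$, traversed in the same orientation. Each connector edge $f_t$ is kept unchanged, because $H_k$ retains all connector edges with their original costs. The path $P_i$ has total cost $k\cdot \widehat c_{\widehat e_i}/k=\widehat c_{\widehat e_i}$, so the resulting closed walk $C_k$ has cost exactly $\widehat c(C)=\operatorname{OPT}_{\mathrm{RPP}}$. We then check that $C_k$ visits every vertex of $V_k$ exactly once. The vertices of $V_k$ are the original private vertices, each of which appears once in $C$, together with the internal vertices of the paths $P_i$, each of which appears once because each $P_i$ is traversed exactly once. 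Hence $C_k$ is a Hamiltonian cycle in $H_k$.

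Finally, we read $C_k$ as a tour in $I_k$. Every consecutive pair $u,v$ on $C_k$ is joined by an edge of $H_k$, so $d_k(u,v)\le c^k_{uv}$. Summing over the edges of $C_k$ bounds the $I_k$-cost of this tour by $c^k(C_k)=\operatorname{OPT}_{\mathrm{RPP}}$. It follows that $\operatorname{OPT}_{\mathrm{TSP}}(I_k)\le\operatorname{OPT}_{\mathrm{RPP}}$.

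The only delicate step is the alternation argument, which guarantees that the cycle in $I_k$ is a genuine Hamiltonian cycle on at least three vertices. This holds here because $k\ge2$ and $m\ge2$ give $|V_k|\ge 2(k+1)\ge 6$, so no degenerate cases arise.
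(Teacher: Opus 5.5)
Your proposal is correct and follows essentially the same route as the paper's proof. You take an optimal tour of $\widehat G$ from Lemma~\ref{lem:preprocess-opt}, replace each required edge $\widehat e_i$ by the path $P_i$ to obtain a Hamiltonian tour of $H_k$ of equal cost, and then use $d_k\le c^k$ edgewise; your alternation check just makes explicit a step the paper leaves implicit.
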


\begin{proof}
By Lemma~\ref{lem:preprocess-opt}, a minimum-cost tour in $\widehat G$ that visits every vertex exactly once and traverses every edge of $\widehat R$ has cost $\operatorname{OPT}_{\mathrm{RPP}}$. Replacing each required edge $\widehat e_i$ of such a tour by the entire path $P_i$ gives a tour on $V_k$ of the same cost with respect to $c^k$. Replacing its edges by the corresponding distances $d_k$ cannot increase the cost. Hence $\operatorname{OPT}_{\mathrm{TSP}}(I_k)\le \operatorname{OPT}_{\mathrm{RPP}}$.
\end{proof}

\begin{lemma}\label{lem:lift-metric-tour}
From any feasible tour $C$ of metric TSP instance $I_k$, one can construct in polynomial time a feasible closed walk $W$ for the original RPP instance on $G$ with required edge set $R$ such that $c(W)\le d_k(C)+\frac{2}{k}c(R)$.
\end{lemma}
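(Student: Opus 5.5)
We first expand the metric tour $C$ into a closed walk in $H_k$, then repair the paths $P_i$ that the walk fails to traverse in full, and finally map the result back to $G$.

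\textbf{Expansion.} Replace every edge $uv$ of $C$ by a shortest $u$--$v$ path in $H_k$ with respect to $c^k$. This gives a closed walk $W_k$ in $H_k$ with $c^k(W_k)=d_k(C)$. It visits every vertex of $V_k$, including every internal vertex of every path $P_i$.

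\textbf{Mapping rule.} Every edge of $H_k$ is either a connector edge or a subdivision edge $p_{i,j-1}p_{i,j}$. If a path $P_i$ is traversed in full (from $i^-$ to $i^+$ or back), we map that traversal to the required edge $e_i$ in $G$, at the same cost $c_{e_i}$. Connector edges are expanded into shortest paths in $G$, exactly as in property (P3). The difficulty is that $W_k$ may enter $P_i$ only partially, entering and leaving through the same endpoint. It may therefore never traverse $P_i$ end to end, even though all internal vertices are visited.

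\textbf{Repair (main step).} Fix $i$ and consider the edges of $P_i$ that $W_k$ uses. Every internal vertex of $P_i$ has degree two in $H_k$ and is visited, so the used edges cover all of $P_i$. Each maximal excursion into $P_i$ from $i^-$ goes some distance and returns. Suppose $P_i$ is never traversed end to end. Then some index $j$ splits the path: the edge $p_{i,j-1}p_{i,j}$ is either unused or used only by excursions turning back. Taking the farthest points reached from each side, the excursions together cover $P_i$ at least twice, except possibly for a single edge where they meet.

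The repair modifies the walk locally. We remove a return excursion from $i^-$ that reaches $p_{i,j}$, and a return excursion from $i^+$ that reaches $p_{i,j}$ (or $p_{i,j+1}$). Of the two copies, we keep one forward pass and one backward pass, and insert at most one extra copy of a single edge of cost $\widehat c_{\widehat e_i}/k$ so that they join into a full traversal. After reordering, the walk goes $i^-\to i^+$ along $P_i$ and continues where the $i^+$ excursion had returned. We also need a return from $i^+$ to $i^-$: using the doubled coverage of $P_i$, this becomes one full traversal plus walking back, with another single edge inserted.

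A cleaner accounting than this case analysis is by parity and connectivity. Take the multiset of edges of $W_k$, and add at most two copies of one edge of $P_i$ per path. This ensures that in the edge multiset $P_i$ either appears an odd number of times in full or the walk can be rerouted. Then we take the Eulerian multigraph and contract each $P_i$ that is covered at least once to the edge $e_i$, at per-copy cost $c_{e_i}$. A union of full copies of $P_i$ plus a ``hole'' edge costs at most (multiplicity) $\cdot\,c_{e_i}+\frac{2}{k}c_{e_i}$, after adding at most two copies of the missing edge to make each copy whole. In every case the extra cost for $P_i$ is at most $\frac{2}{k}\widehat c_{\widehat e_i}$.

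\textbf{Conclusion.} After these additions each $P_i$ is covered by an integer number $t_i\ge1$ of full copies, and degrees remain even since we added copies in pairs. Contracting each $P_i$ to $t_i$ copies of $e_i$ and expanding connector edges yields a connected Eulerian multigraph in $G$ containing $R$. Its cost is at most $d_k(C)+\frac2k c(R)$, and an Euler tour of it gives $W$. Every step is polynomial in the size of $H_k$.

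I expect the hardest part to be the rigorous argument that at most two edge copies per path suffice to turn the partial coverage of $P_i$ into whole copies while keeping even degrees.
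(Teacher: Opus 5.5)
Your overall plan is the right one: expand $C$ into a closed walk in $H_k$, add at most two copies of one missing edge per path $P_i$, collapse each $P_i$ to copies of $\widehat e_i$, then map back to $G$. Your ``farthest points reached from each side'' observation also correctly shows that at most one edge of $P_i$ is unused. But the step that actually carries the lemma is asserted rather than proved, as you acknowledge at the end.

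The claim that after the additions ``each $P_i$ is covered by an integer number $t_i\ge1$ of full copies'' is false as stated, because the multiplicities along $P_i$ need not be equal. For $k=3$, one full traversal plus one turn-back excursion from $i^-$ to $p_{i,1}$ gives multiplicities $3,1,1$. So collapsing to $t_i$ copies of $\widehat e_i$ must discard partial coverage. The remark that degrees ``remain even since we added copies in pairs'' only covers the adding step. It says nothing about whether the discarding keeps the degrees of $i^-$ and $i^+$ even. Likewise, ``either appears an odd number of times in full or the walk can be rerouted'' is not an argument, and the excursion surgery in your repair paragraph is too imprecise to stand on its own.

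The missing idea is a short parity observation on the edge multiset $M$ of the expanded walk, where $m_{i,j}$ denotes the multiplicity of $p_{i,j-1}p_{i,j}$ in $M$. Every internal vertex $p_{i,j}$ has degree two in $H_k$, so its degree in $M$ is $m_{i,j}+m_{i,j+1}$, which is even. Hence all multiplicities along $P_i$ share a common parity.

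After adding two copies of the (at most one) missing edge, all $m_{i,j}$ are positive with this common parity. You may then replace them all by $1$ in the odd case or by $2$ in the even case. This does not increase cost, keeps $i^-$ and $i^+$ joined so connectivity is preserved, and preserves the degree parity at $i^-$ and $i^+$, since $m_{i,1}$ and $m_{i,k}$ have the common parity. The path then collapses to $\widehat e_i$ with multiplicity one or two at the same cost.

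With this observation the walk-level rerouting becomes unnecessary. The bound $d_k(C)+\frac{2}{k}c(R)$ then follows exactly as you intended.
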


\begin{proof}
Replace every edge of $C$ by a shortest path in $H_k$ with respect to $c^k$. Their cyclic concatenation is a closed walk that visits every vertex of $V_k$. Let $M$ be its edge multiset. Then $(V_k,M)$ is connected and Eulerian, and $c^k(M)=d_k(C)$.

Fix $i\in\{1,\ldots,m\}$. At most one edge of path $P_i$ has multiplicity zero in $M$. Indeed, if two distinct edges of $P_i$ had multiplicity zero, the vertices between them would contain a nonempty component of $(V_k,M)$ consisting only of internal vertices of $P_i$. Since every such vertex has degree two in $H_k$, this component would have no edge to the remainder of the multigraph, contradicting connectedness.

If one edge of $P_i$ is missing, increase its multiplicity by two. Performing this operation for every $i$ preserves connectedness and all degree parities and increases the cost by at most $\frac{2}{k}\sum_{i=1}^m\widehat c_{\widehat e_i}=\frac{2}{k}c(R)$. Thus the resulting multigraph is connected and Eulerian and contains every edge of every path $P_i$.

It remains to replace each $P_i$ by the corresponding required edge $\widehat e_i$ with the appropriate multiplicity. Let $m_{i,j}>0$ be the multiplicity of $p_{i,j-1}p_{i,j}$. Since every internal vertex $p_{i,j}$ has even degree, $m_{i,j}+m_{i,j+1}\equiv0\pmod 2$, so all edge multiplicities along $P_i$ have the same parity. If they are odd, replace every $m_{i,j}$ by one; if they are even, replace every $m_{i,j}$ by two. This does not increase the cost and preserves the parity of the degree of each endpoint. The path $P_i$ can then be replaced by $\widehat e_i$ with multiplicity one or two, respectively, with the same cost.

After applying this operation to every $i$, we obtain a connected Eulerian multigraph in $\widehat G$ containing every edge of $\widehat R$. Replace each edge of $\widehat R$ by its corresponding required edge in $R$, and replace each connector edge $uv$ by a shortest path in $G$ from $\pi(u)$ to $\pi(v)$. The resulting closed walk is feasible for the original RPP instance and satisfies the claimed bound.
\end{proof}

\begin{proof}[Proof of Theorem~\ref{thm:generic-transfer}]
The cases $|R|\le1$ are solved optimally as in Algorithm~\ref{alg:rpp}. Assume $|R|\ge2$, and set $k=\max\{2,\lceil2/\varepsilon\rceil\}$. Run the $\alpha$-approximation algorithm for metric TSP on $I_k$, obtaining a tour $C$, and lift $C$ to a feasible RPP closed walk $W$ by Lemma~\ref{lem:lift-metric-tour}. Then
\[
\begin{aligned}
c(W)
&\le
d_k(C)+\frac{2}{k}c(R)\\
&\le
\alpha\,\operatorname{OPT}_{\mathrm{TSP}}(I_k)
+\frac{2}{k}c(R)\\
&\le
\left(\alpha+\frac{2}{k}\right)
\operatorname{OPT}_{\mathrm{RPP}}\\
&\le
(\alpha+\varepsilon)\operatorname{OPT}_{\mathrm{RPP}},
\end{aligned}
\]
where the third inequality follows from
Lemma~\ref{lem:subdivision-opt-upper} and from
$c(R)\le\operatorname{OPT}_{\mathrm{RPP}}$, since every feasible RPP
closed walk traverses every edge of $R$ at least once and all edge
costs are nonnegative. Since $H_k$ has $O(k|R|)$ vertices, and
$k$ is a constant for every fixed $\varepsilon>0$, the entire
construction runs in polynomial time.
\end{proof}

The LP-relative form of this transfer and the corresponding integrality-gap relation are recorded in Appendix~\ref{app:lp-consequences}.

\section{Conclusion}\label{sec:conclusion}

We considered approximation algorithms for RPP. We showed that the
technique of sampling from maximum entropy distributions for metric TSP
can be adapted to RPP, yielding a randomized polynomial-time
$(3/2-\epsilon)$-approximation algorithm for some absolute constant
$\epsilon>10^{-36}$. The analysis is also relative to
\textup{\textsc{RPP-LP}}, and the corresponding guarantee can be
attained deterministically using the conditional-expectation method of
\citet{KarlinKleinOveisGharanIPCO2023}. We also described how an
$\alpha$-approximation algorithm for metric TSP yields, for every fixed
$\varepsilon>0$, an $(\alpha+\varepsilon)$-approximation algorithm
for RPP.

Finally, the same guarantees apply to the GRP. RPP is a special case of GRP. Conversely, a GRP instance can be reduced cost-preservingly to RPP by, for each required vertex $v$ not already incident to a required edge, introducing a fresh leaf $v'$ and the zero-cost required edge $vv'$. Traversing $vv'$ forces the resulting RPP walk to visit $v$, while any feasible GRP walk that visits $v$ can traverse the zero-cost detour $vv'v$. Thus the optimum value is preserved, and both algorithms extend immediately to GRP.

\paragraph{Acknowledgment.}
Generative AI tools were used for language editing and minor editorial assistance. The author takes full responsibility for the manuscript.

\appendix
\section{LP-relative consequences}\label{app:lp-consequences}

The subdivision transformation is also compatible with the LP relaxations.

\begin{lemma}\label{lem:subdivision-lp}
Let $x$ be feasible for \textup{\textsc{RPP-LP}}. Define $z^k$ on
the metric TSP instance $I_k$ by retaining $x_e$ on every connector
edge, assigning value one to every edge of every path $P_i$, and
assigning zero to all remaining edges. Then $z^k$ is feasible for
\textup{\textsc{TSP-LP}} and
\[
d_k(z^k)
\le
\sum_{e\in\widehat E_{\mathrm{con}}}\widehat c_e x_e
+\sum_{e\in\widehat R}\widehat c_e.
\]
Consequently, $L_{\mathrm{TSP}}(I_k)\le L$.
\end{lemma}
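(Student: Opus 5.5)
The plan has two parts: feasibility of $z^k$ for \textup{\textsc{TSP-LP}} on the complete graph on $V_k$, and the cost comparison. The cost inequality is immediate. Every connector edge $uv$ is an edge of $H_k$, so $d_k(u,v)\le\widehat c_{uv}$. Every path edge $p_{i,j-1}p_{i,j}$ satisfies $d_k\le\widehat c_{\widehat e_i}/k$, and the $k$ path edges of $P_i$ have total $d_k$-cost at most $\widehat c_{\widehat e_i}$. Summing these gives the displayed inequality. The final consequence $L_{\mathrm{TSP}}(I_k)\le L$ then follows by applying this to an optimal solution $x^*$ of \textup{\textsc{RPP-LP}}.

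For the degree equations, I will check each vertex type separately.
\begin{itemize}
\item An internal vertex $p_{i,j}$ with $1\le j\le k-1$ is incident to exactly two path edges, each of value one, and to nothing else in the support of $z^k$, so its degree is $2$.
\item An endpoint $i^-$ (or $i^+$) is incident to one path edge of value one, which plays the role of $\widehat e_i$ with $\overline x_{\widehat e_i}=1$, together with its connector edges carrying values $x_e$. Its $z^k$-degree therefore equals $\overline x(\delta(i^-))=2$ by \eqref{eq:rpp-lp-degree}.
\end{itemize}
Nonnegativity is clear.

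The main step is the cut constraints \eqref{eq:subtour-cut}. Fix $\emptyset\ne S\subsetneq V_k$. First, if some path $P_i$ has vertices on both sides of the cut, then $\delta(S)\cap P_i$ contains at least one path edge. If two or more paths are split, or one path is split at least twice, we already get $z^k(\delta(S))\ge2$. So assume this does not happen.

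\begin{itemize}
\item \emph{Exactly one path edge of $P_i$ is cut.} Then $S$ contains exactly one of $i^-,i^+$. Let $U=S\cap\widehat V$, so $\widehat e_i\in\delta_{\widehat G}(U)$. If $U$ is empty, or if $U$ is all of $\widehat V$, then $S$ consists only of internal vertices. In that case every path boundary crossing contributes, and with one crossing $S$ would be an internal segment meeting only one endpoint region. I will handle this by noting that a set of internal vertices of $P_i$ that is contiguous and proper has two boundary path edges, unless it contains an endpoint. Otherwise $U$ is a nonempty proper subset of $\widehat V$, and $z^k(\delta(S))$ equals $\overline x(\delta_{\widehat G}(U))\ge2$ by \eqref{eq:rpp-lp-cut}. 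This holds because the connector edges crossing are the same, and the one cut path edge substitutes for $\widehat e_i$.
\item \emph{No path edge is cut.} Then each $P_i$ lies entirely on one side, and $U=S\cap\widehat V$ is a nonempty proper subset of $\widehat V$. The cut values coincide with $\overline x(\delta_{\widehat G}(U))\ge2$.
\end{itemize}

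The obstacle I expect is exactly this careful correspondence between cuts of $V_k$ and cuts of $\widehat V$ when paths are partially cut. The clean route is the following. Define $U=S\cap\widehat V$, and observe that
\[
z^k(\delta(S))\ \ge\ \sum_{e\in\widehat E_{\mathrm{con}}\cap\delta(U)}x_e+\#\{i: P_i\text{ has an edge in }\delta(S)\}.
\]
The count of paths with a cut edge is at least the number of $\widehat e_i\in\delta_{\widehat G}(U)$. This gives $z^k(\delta(S))\ge\overline x(\delta_{\widehat G}(U))\ge2$ whenever $U\ne\emptyset,\widehat V$. In the remaining case $S$ (or its complement) consists only of internal vertices. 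Any such nonempty set has at least two cut path edges, since internal vertices have all their neighbours on their path and each maximal run inside a path is bounded by two cut edges.
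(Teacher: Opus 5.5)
Your proposal is correct and is essentially the paper's argument: you compare $z^k(\delta(S))$ with $\overline x(\delta_{\widehat G}(U))$ for $U=S\cap\widehat V$ and treat the degenerate case $U\in\{\emptyset,\widehat V\}$ by exhibiting two crossing path edges. The paper instead first disposes of paths crossing twice and then obtains an equality, whereas your inequality version avoids that preliminary reduction. Keep your final ``clean route'' paragraph and drop the first case analysis: its sub-case ``$U$ empty or all of $\widehat V$'' is vacuous, since exactly one cut edge of $P_i$ already puts exactly one of $i^-,i^+$ in $S$.
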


\begin{proof}
The degree equations are immediate. Consider a nontrivial cut
$\delta(S)$ of $H_k$. If some path $P_i$ crosses the cut at least
twice, then $z^k(\delta(S))\ge2$. Otherwise, every $P_i$ crosses at
most once. In this case $U:=S\cap\widehat V$ is a nonempty proper
subset of $\widehat V$; otherwise, since $S$ is nontrivial, some
$P_i$ would cross the cut at least twice. Moreover,
\[
z^k(\delta_{H_k}(S))
=
\overline x(\delta_{\widehat G}(U))
\ge2,
\]
where the equality follows because a path $P_i$ crosses
$\delta(S)$ exactly once precisely when its endpoints are separated
by $U$. Thus $z^k$ satisfies all subtour cut constraints.

The cost inequality follows directly from the construction of $H_k$
and its metric completion $I_k$. Taking $x=x^*$ gives
$L_{\mathrm{TSP}}(I_k)\le L$.
\end{proof}

\begin{corollary}[LP-relative transfer]\label{cor:generic-lp-transfer}
Suppose that, for every metric TSP instance $I$, an algorithm returns
a tour of cost at most $\alpha L_{\mathrm{TSP}}(I)$. Then, for every
fixed $\varepsilon>0$, the transformation above yields a feasible RPP
closed walk $W$ satisfying $c(W)\le(\alpha+\varepsilon)L$, where $L$ is the optimum value of \textup{\textsc{RPP-LP}} on the
preprocessed RPP instance.
\end{corollary}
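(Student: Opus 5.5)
The plan is to mirror the proof of Theorem~\ref{thm:generic-transfer}, replacing the comparison with $\operatorname{OPT}_{\mathrm{TSP}}(I_k)$ by a comparison with $L_{\mathrm{TSP}}(I_k)$. As there, I would dispose of $|R|\le1$ separately. These cases are solved optimally by the first two steps of Algorithm~\ref{alg:rpp}, so the returned walk has cost $\operatorname{OPT}_{\mathrm{RPP}}$. I then still need $\operatorname{OPT}_{\mathrm{RPP}}\le(\alpha+\varepsilon)L$ there. For $R=\emptyset$ both sides are zero. For $|R|=1$ the preprocessed graph has two vertices, and the degree constraints force the single connector edge to have value one. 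Hence $L=c_{ab}+\operatorname{dist}_G(a,b)=\operatorname{OPT}_{\mathrm{RPP}}$, and $\alpha\ge1$ suffices. If this degenerate case causes trouble, I would instead note that the corollary is meant for $|R|\ge2$, matching the setting of Section~\ref{sec:preprocessing}.

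For $|R|\ge2$, fix $k=\max\{2,\lceil2/\varepsilon\rceil\}$, build $I_k$, run the assumed LP-relative algorithm to get a tour $C$ with $d_k(C)\le\alpha L_{\mathrm{TSP}}(I_k)$, and lift $C$ with Lemma~\ref{lem:lift-metric-tour}. This gives
\[
c(W)\le d_k(C)+\tfrac2k c(R)\le \alpha L_{\mathrm{TSP}}(I_k)+\tfrac2k c(R)\le \alpha L+\tfrac2k c(R),
\]
where the last step is Lemma~\ref{lem:subdivision-lp}.

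The remaining step, and the only point that differs from the non-LP version, is to bound $c(R)$ by $L$ rather than by $\operatorname{OPT}_{\mathrm{RPP}}$. This is immediate from the form of \textsc{RPP-LP}. Its objective contains the constant term $\sum_{e\in\widehat R}\widehat c_e=c(R)$, and the connector part $\sum\widehat c_e x_e$ is nonnegative because connector costs are shortest-path distances and $x\ge0$. Therefore $c(R)\le L$, and
\[
c(W)\le(\alpha+2/k)L\le(\alpha+\varepsilon)L.
\]
Polynomial running time follows exactly as in the proof of Theorem~\ref{thm:generic-transfer}, since $k$ is constant for fixed $\varepsilon$. I expect no real obstacle. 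The only care needed is the small-$|R|$ bookkeeping and making explicit that $c(R)\le L$ replaces $c(R)\le\operatorname{OPT}_{\mathrm{RPP}}$.
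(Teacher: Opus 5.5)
Your proposal is correct and matches the paper's proof: take $k=\max\{2,\lceil 2/\varepsilon\rceil\}$, combine Lemma~\ref{lem:lift-metric-tour} with Lemma~\ref{lem:subdivision-lp}, and use $c(R)\le L$ (required edges fixed at one, nonnegative costs) in place of $c(R)\le\operatorname{OPT}_{\mathrm{RPP}}$. One small correction to your side remark on $|R|=1$: the simple complete graph $\widehat G$ on $\{1^-,1^+\}$ has no connector edge, since its only pair is $\widehat e_1$ itself, so \textsc{RPP-LP} is infeasible there rather than having value $c_{ab}+\operatorname{dist}_G(a,b)$; your fallback of restricting to $|R|\ge2$, as the paper does, is the right resolution.
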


\begin{proof}
Set $k=\max\{2,\lceil2/\varepsilon\rceil\}$. Repeat the proof of Theorem~\ref{thm:generic-transfer}, using
Lemma~\ref{lem:subdivision-lp} in place of
Lemma~\ref{lem:subdivision-opt-upper}. Since every required edge has
value one in the extension of every feasible solution of
\textup{\textsc{RPP-LP}} and all edge costs are nonnegative,
$c(R)\le L$. Therefore Lemma~\ref{lem:lift-metric-tour} gives
\[
c(W)
\le
\alpha L_{\mathrm{TSP}}(I_k)+\frac{2}{k}c(R)
\le
\left(\alpha+\frac{2}{k}\right)L
\le
(\alpha+\varepsilon)L.
\]
\end{proof}

\subsection{Equality of the integrality gaps}\label{sec:gap}

For an RPP instance $J$ with at least two required edges, let
$L_{\mathrm{RPP}}(J)$ denote the optimum value of \textup{\textsc{RPP-LP}}
on its preprocessed instance. Let $\gamma_{\mathrm{TSP}}$ and
$\gamma_{\mathrm{RPP}}$ denote the corresponding worst-case
integrality gaps.

\begin{theorem}\label{thm:gap-equality}
The two relaxations have the same worst-case integrality gap: $\gamma_{\mathrm{RPP}}=\gamma_{\mathrm{TSP}}$. Consequently, the worst-case integrality gap of \textup{\textsc{RPP-LP}} on preprocessed RPP instances is strictly below $3/2$ by \citet{KarlinKleinOveisGharanFOCS2022}.
\end{theorem}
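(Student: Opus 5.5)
We prove the two inequalities $\gamma_{\mathrm{RPP}}\le\gamma_{\mathrm{TSP}}$ and $\gamma_{\mathrm{TSP}}\le\gamma_{\mathrm{RPP}}$ separately. Here a gap is the supremum of $\operatorname{OPT}/L$ over instances with positive LP value.

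\emph{Upper bound.} Fix an RPP instance $J$ with $|R|\ge2$ and a fixed $\varepsilon>0$, and form $I_k$ with $k=\max\{2,\lceil2/\varepsilon\rceil\}$. Take an optimum tour $C$ of $I_k$. By definition of the gap, $d_k(C)=\operatorname{OPT}_{\mathrm{TSP}}(I_k)\le\gamma_{\mathrm{TSP}}L_{\mathrm{TSP}}(I_k)$. Lemma~\ref{lem:lift-metric-tour} lifts $C$ to a feasible walk $W$ with $c(W)\le d_k(C)+\frac2k c(R)$. Lemma~\ref{lem:subdivision-lp} gives $L_{\mathrm{TSP}}(I_k)\le L_{\mathrm{RPP}}(J)$, and we also have $c(R)\le L_{\mathrm{RPP}}(J)$. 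Together these give
\[
\operatorname{OPT}_{\mathrm{RPP}}(J)\le(\gamma_{\mathrm{TSP}}+\varepsilon)L_{\mathrm{RPP}}(J).
\]
This is exactly the argument of Corollary~\ref{cor:generic-lp-transfer}, with the existential gap in place of an algorithm. Letting $\varepsilon\to0$ gives $\gamma_{\mathrm{RPP}}\le\gamma_{\mathrm{TSP}}$.

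\emph{Lower bound.} For the reverse direction we embed a metric TSP instance $I=(V,c)$ into RPP. The natural choice is to attach to each $v\in V$ a zero-cost required edge $vv'$ with a fresh leaf $v'$, as in the GRP reduction of Section~\ref{sec:conclusion}. The original RPP optimum then equals the optimum closed walk visiting all of $V$. Because $c$ is metric, this equals $\operatorname{OPT}_{\mathrm{TSP}}(I)$ by shortcutting. We need $|V|\ge2$, so that there are at least two required edges.

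After preprocessing, each $v$ yields a pair $i^-,i^+$ joined by a zero-cost required edge. The connector costs are $\operatorname{dist}(\pi(u),\pi(w))$, so both copies of $v$ have distance $0$ to each other and distance $c_{vw}$ to the copies of $w$. It remains to show $L_{\mathrm{RPP}}\ge L_{\mathrm{TSP}}(I)$. Given a feasible $x$ for \textsc{RPP-LP}, contract every required edge $\widehat e_v$ back to $v$ and sum the connector values between the pairs to get $y_{vw}$. Connector edges inside a pair do not exist, since the only edge between $i^-$ and $i^+$ is the required edge.

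\begin{itemize}[label={}]
\item \emph{Degrees.} Each contracted vertex has $\overline x$-degree $2+2=4$, minus $2$ for the required edge counted twice internally. This leaves connector degree $2$, so $y(\delta(v))=2$.
\item \emph{Cuts.} For a cut $S\subseteq V$, take the union of the corresponding pairs. This cut contains no required edge, and its $\overline x$-value equals $y(\delta(S))\ge2$.
\item \emph{Cost.} The cost is preserved because the connector costs equal $c_{vw}$ and the required edges cost zero.
\end{itemize}

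Hence $y$ is feasible for \textsc{TSP-LP}, and $L_{\mathrm{TSP}}(I)\le L_{\mathrm{RPP}}(J)$. Each metric TSP instance therefore gives an RPP instance with at least as large a ratio, so $\gamma_{\mathrm{TSP}}\le\gamma_{\mathrm{RPP}}$. The final sentence of the theorem then follows from the bound of \citet{KarlinKleinOveisGharanFOCS2022} on $\gamma_{\mathrm{TSP}}$.

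The main obstacle is this second direction. The precise definition of the TSP gap, and whether instances with $L=0$ need separate handling, must match the conventions used for $\gamma_{\mathrm{TSP}}$. The case $|V|=1$ must be excluded separately: there the TSP value is trivially $0$. Neither point should affect the supremum.
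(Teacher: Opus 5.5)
\textbf{There is a genuine gap in the second direction.}

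Your first direction, $\gamma_{\mathrm{RPP}}\le\gamma_{\mathrm{TSP}}$, is the paper's argument and is fine: you lift an optimal tour of $I_k$ by Lemma~\ref{lem:lift-metric-tour}, use $L_{\mathrm{TSP}}(I_k)\le L$ from Lemma~\ref{lem:subdivision-lp} together with $c(R)\le L$, and let $k\to\infty$.

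The reverse direction fails because the LP inequality you prove points the wrong way. Your embedding gives $\operatorname{OPT}_{\mathrm{RPP}}(J)=\operatorname{OPT}_{\mathrm{TSP}}(I)$. Combined with that, your bound $L_{\mathrm{TSP}}(I)\le L_{\mathrm{RPP}}(J)$ yields
\[
\frac{\operatorname{OPT}_{\mathrm{RPP}}(J)}{L_{\mathrm{RPP}}(J)}\le\frac{\operatorname{OPT}_{\mathrm{TSP}}(I)}{L_{\mathrm{TSP}}(I)}.
\]
So the embedded instance has at most the TSP ratio, not at least: a larger denominator shrinks the ratio. To get $\gamma_{\mathrm{TSP}}\le\gamma_{\mathrm{RPP}}$ you need $L_{\mathrm{RPP}}(J)\le L_{\mathrm{TSP}}(I)$. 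That requires a cost-preserving map from feasible \textsc{TSP-LP} solutions to feasible \textsc{RPP-LP} solutions. Your projection, which sums the four connector values between two pairs, is the converse map and only establishes the direction you do not need here.

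\textbf{The missing lift is where the actual work lies.} The paper sets $\widehat x_{u^\sigma w^\tau}=x_{uw}/4$ for all $\sigma,\tau\in\{-,+\}$, so every copy has connector degree one. The cuts then fall into three cases.
\begin{itemize}
\item A cut splitting no pair is a \textsc{TSP-LP} cut.
\item A cut splitting at least two pairs is paid for by the required edges.
\item A cut splitting exactly one pair, say that of $v$, is the delicate case. The required edge contributes only one. Let $U$ be the set of vertices whose two copies lie on the side of the selected copy of $v$. The connector contribution equals $\frac12x(\delta(U))+\frac12x(\delta(U\cup\{v\}))$. This is at least one by the \textsc{TSP-LP} cut constraints, or by $x(\delta(v))=2$ when $U=\emptyset$ or $U=V\setminus\{v\}$.
\end{itemize}

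After preprocessing, your pendant-edge construction becomes exactly the paper's instance $J(I)$: all four connectors between the pairs of $u$ and $w$ cost $c_{uw}$. So this lift applies verbatim once you add it. The convention issues you flag as the main obstacle ($L=0$, $|V|=1$) are minor by comparison.
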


\begin{proof}
We first show $\gamma_{\mathrm{RPP}}\le\gamma_{\mathrm{TSP}}$.
Fix an RPP instance $J$ with cost function $c_J$ and required edge set $R_J$, let $L=L_{\mathrm{RPP}}(J)$, and apply
the subdivision transformation with parameter $k$. By
Lemmas~\ref{lem:lift-metric-tour} and~\ref{lem:subdivision-lp}, $\operatorname{OPT}_{\mathrm{RPP}}(J)\le\gamma_{\mathrm{TSP}}L+\frac{2}{k}c_J(R_J)$. Letting $k\to\infty$ gives
$\operatorname{OPT}_{\mathrm{RPP}}(J)\le\gamma_{\mathrm{TSP}}L$,
and hence $\gamma_{\mathrm{RPP}}\le\gamma_{\mathrm{TSP}}$.

For the reverse inequality, let $I=(V,d)$ be a metric TSP instance.
For each $v\in V$, create two vertices $v^-$ and $v^+$ joined
by a zero-cost required edge. For every distinct $u,v\in V$, add
the four connector edges between the two corresponding pairs, each
of cost $d_{uv}$. Denote the resulting RPP instance by $J(I)$. Because the required edges form a matching covering all vertices and the connector costs are induced by the metric $d$, this instance is already in the preprocessed form of Section~\ref{subsec:preprocessed-graph}, up to renaming the private vertices.

Contracting the zero-cost required edges maps every feasible RPP
solution to a closed walk visiting all vertices of $I$, which can
be shortcut to a TSP tour without increasing its cost. Conversely,
every TSP tour can be expanded by inserting the zero-cost required
edge corresponding to each vertex. Thus
$\operatorname{OPT}_{\mathrm{RPP}}(J(I))
 =\operatorname{OPT}_{\mathrm{TSP}}(I)$.

Given a feasible
\textup{\textsc{RPP-LP}} solution $\widehat x$, define
$x_{uv}=\sum_{\sigma,\tau\in\{-,+\}}
\widehat x_{u^\sigma v^\tau}$. The degree constraints project to
$x(\delta(v))=2$, while applying an RPP cut constraint to the union
of both copies of every vertex in $U\subsetneq V$ gives
$x(\delta(U))\ge2$. Hence $x$ is a cost-preserving feasible
solution of \textup{\textsc{TSP-LP}}.

Conversely, from a feasible \textup{\textsc{TSP-LP}} solution $x$,
set $\widehat x_{u^\sigma v^\tau}=x_{uv}/4$. Every copy then has
connector degree one. For any cut, if at least two required pairs are
split, the required edges already contribute at least two. If no pair
is split, the connector contribution is the corresponding TSP cut
value. If exactly one pair, say that of $v$, is split, and $U$
denotes the vertices whose two copies lie on the same side as the
selected copy of $v$, the connector contribution is
$\frac12x(\delta(U))+\frac12x(\delta(U\cup\{v\}))\ge1$. If both sets define nontrivial cuts, this follows from the cut constraints of \textup{\textsc{TSP-LP}}; in the boundary cases $U=\emptyset$ or $U=V\setminus\{v\}$, it follows from the degree equation $x(\delta(v))=2$. Together with the crossing required edge, the cut has value at least two. Thus the lift is feasible and cost preserving.

Therefore
$L_{\mathrm{RPP}}(J(I))=L_{\mathrm{TSP}}(I)$, and each metric TSP
integrality-gap ratio is reproduced exactly by an RPP instance.
Hence $\gamma_{\mathrm{TSP}}\le\gamma_{\mathrm{RPP}}$, completing
the proof.
\end{proof}

\end{document}